\documentclass[utf8,latin1]{article}
\usepackage[utf8]{inputenc}
\usepackage{amsfonts,amsmath,amsthm,amssymb,bm,wasysym,comment}

\newtheorem{teor}{Theorem}
\newtheorem{remark}[teor]{Remark}
\newtheorem{lemma}[teor]{Lemma}
\newtheorem{cor}[teor]{Corollary}
\newtheorem {prop}[teor]{Proposition}

\newtheorem{definition}[teor]{Definition}
\newtheorem{example}[teor]{Example}
\newtheorem{notation}[teor]{Notation}

\newtheorem{conjecture}[teor]{Conjecture}

\usepackage[autostyle=true, italian=guillemets] {csquotes}
\usepackage[english]{babel}

\usepackage{graphics,standalone,tikz,pgfplots,textcomp,tikz}
\usepackage{geometry}
\geometry{a4paper, top=3cm, bottom=3cm, left=3cm, right=3cm}
\pgfplotsset{compat=1.18} 

\DeclareMathOperator{\reg}{reg}

\DeclareMathOperator{\topp}{\mathrm{top}}
\DeclareMathOperator{\solvdeg}{\mathrm{solv.deg}}
\DeclareMathOperator{\LPP}{LPP}

\DeclareMathOperator{\lex}{lex}

\newcommand{\FF}{\mathbb{F}}

\usepackage{hyperref} 
\hypersetup{pdfpagemode=UseNone,colorlinks=true,linkcolor=blue,citecolor=blue,urlcolor=blue,allbordercolors=white,pdftitle={Reg LPP-ideals},pdfauthor={}}

\definecolor{bluegray}{rgb}{0.4, 0.6, 0.8}

\title{The complexity of solving a system of equations of the same degree\thanks{This research was funded by armasuisse through grant no. CYD-C-2020010.}}

\author{Giulia Gaggero and Elisa Gorla}  

\date{}

\begin{document}

\maketitle

\begin{abstract}
    Many systems of interest in cryptography consist of equations of the same degree. Under the assumption that the degree of regularity is finite, we prove upper bounds on the degree of regularity of a system of equations of the same degree, with or without adding the field equations to the system. The bounds translate into upper bounds on the solving degree of the systems, and hence on the complexity of solving them via Gr\"obner bases methods. Our bounds depend on the number of equations in the system, the number of variables, and the degree of the equations.
\end{abstract}

\section{Introduction}\label{intro}

Polynomial systems play an important role within cryptography, as several cryptographic schemes rely on the difficulty of solving multivariate polynomial equations over finite fields, especially in the post‑quantum setting. Most notably, multivariate cryptography relies on the difficulty of solving multivariate polynomial equations over finite fields as its core security assumption. The public key of a multivariate system is a polynomial system, which is easy to evaluate, but is believed to be infeasible to solve without secret structural information, giving rise to candidate quantum‑resistant encryption and signature schemes. Beyond multivariate schemes, computational problems can often be modeled via polynomial systems, whose solutions provide a solution to the original problem. MinRank~\cite{BFS96} is an example of such a problem, and the most efficient way of solving it currently consists of modeling it via a polynomial system~\cite{BBCGPSTV20,BBCPSV22,BBBGNRT20}. In algebraic cryptanalysis, block ciphers, hash functions, and other primitives are sometimes translated into large systems of polynomial equations, whose solutions allow an attacker to recover keys or find collisions, see e.g.~\cite{KS99, SKPI07, AC09, CW09, CP02, KLR24, CCMMP25}. 

The security of many cryptographic schemes, therefore, relies on the complexity of solving the related polynomial system. The degree of regularity of a polynomial system provides an upper bound on the complexity of solving the system using Gr\"obner bases~\cite{Sal23}, and hence an upper bound on the security of the corresponding scheme. However, often the degree of regularity is hard to compute or even estimate. In practice, one cannot compute it directly for systems of cryptographic size, as this would require a Gr\"obner basis computation for values of the parameters for which it is considered infeasible. Depending on the system in question, the degree of regularity and other related invariants may also be hard to estimate theoretically. Due to these difficulties, cryptographers often rely on heuristic estimates. A widespread approach consists of assuming that a system is semiregular (see e.g.~\cite{CFMPPR17}), which provides an explicit formula for its degree of regularity~\cite{BFS04}.

In this paper, we consider systems for which the degree of regularity is finite. These are exactly the systems for which the approach via the degree of regularity provides a non-trivial complexity bound. For polynomials in $n$ variables, the degree of regularity is finite if and only if the ideal generated by the top-degree part of the system contains a regular sequence of $n$ polynomials. For a system of polynomials of the same degree $D$, the degree of regularity is finite if and only if the regular sequence may be found in degree $D$. For such systems, in Theorem~\ref{teor:DD:LPP} we prove a sharp upper bound for the degree of regularity in terms of the number of equations, the number of variables, and the degree of the equations. In Theorem~\ref{teor:D>q} we prove a similar result for a system of equations to which we add the field equations.

Our bounds on the degree of regularity readily produce bounds on the complexity of computing a Gr\"obner basis, hence of solving, many systems that are of interest in cryptography. Bounds on the complexity produced in this way have the advantage of being widely applicable and the disadvantage of not always being close to the actual complexity of solving the systems to which they apply. In particular, the complexity estimates they produce are less sharp than those produced using the heuristic assumption that the system is semiregular. However, our bounds are proven and do not rely on any heuristic assumption. In particular, they apply to the situations where no semiregular sequence exists. This is the case for certain choices of the parameters and over a given field. 
On the other hand, since they are sharp, our bounds give us an idea of what security one can hope to achieve for a system with given parameters. 

The paper is structured as follows. Section 2 contains some useful preliminaries. In Section 3 we briefly review cryptographic semiregular sequences and semiregular sequences, and we introduce the family of polynomial systems that are regular in a given degree $D$. These are the systems that we will focus on in this paper. In Section 4, after recalling some results from commutative algebra, we establish bounds on the degree of regularity and the solving degree of a polynomial system that is regular in degree $D$. If the field size is smaller than the degree $D$ of the equations, then we establish bounds on the degree of regularity and the solving degree of the polynomial system to which we add the field equations. The bounds appear in Theorem~\ref{teor:DD:LPP}, Corollary~\ref{cor:DD:salizzoni}, Proposition~\ref{prop:D>q}, and Theorem~\ref{teor:D>q}.

\section{Preliminaries}

Let $\mathbb{F}$ be an arbitrary field and let $\FF_q$ be the finite field of cardinality $q$. Let $R=\mathbb{F}[x_1,\ldots,x_n]$ be the ring of polynomials in $n$ variables with coefficients in the field $\mathbb{F}$. Even though our main motivation comes from cryptography, most of the results in this paper do not depend on the ground field and will be stated and proven over an arbitrary field. 

We start by recalling how a typical multivariate one-way function is constructed. Let $\FF=\FF_q$, $f_1,\ldots, f_m \in R$, and consider the evaluation map
$$\begin{array}{cccc}
	\mathcal{F}: & \mathbb{F}_q^n & \rightarrow & \mathbb{F}_q^m \\
	& \alpha=(\alpha_1,\ldots,\alpha_n) & \mapsto & (f_1(\alpha_1,\ldots,\alpha_n),\ldots,f_m(\alpha_1,\ldots,\alpha_n))
	\end{array}$$
To hide the structure of $\mathcal{F}$, one composes it with two random invertible linear maps $S:\FF_q^n \rightarrow \FF_q^n$ and $T:\FF_q^m \rightarrow \FF_q^m$. One obtains $\mathcal{P}=T \circ \mathcal{F} \circ S$, a set of $m$ polynomials $p_1,\ldots,p_m$ in $n$ variables over~$\FF_q$. The public key of the multivariate scheme is $\mathcal{P}=(p_1,\ldots,p_m)$ and the private key is $\{\mathcal{F},S,T\}$. The trapdoor consists of constructing $\mathcal{F}$ such that $\mathcal{F}^{-1}$ is efficiently computable. Notice that $\mathcal{P}$ should be hard to invert without the knowledge of $S,T$, in particular it should be hard to recover the structure of $\mathcal{F}$ from $\mathcal{P}$. 

Finding inverse images with respect to $\mathcal{P}$ corresponds to computing solutions of multivariate polynomial systems. It is well-known that this problem can be solved by computing a Gr\"obner basis of the system $\mathcal{P}$, we refer the interested reader to~\cite{CG20} for more detail.
The first algorithm for computing Gr\"obner bases appeared in the doctoral thesis of Buchberger~\cite{B65}.
Modern algorithms for computing Gr\"obner bases are based on liner algebra and are often more efficient than Buchberger's. Examples of linear-algebra-based algorithms are $F_4$~\cite{Fau99}, $F_5$~\cite{Fau02}, the XL Algorithm~\cite{CKPS00}, MutantXL~\cite{BDMM09}, and their variants. All these algorithms compute the reduced row echelon form of the Macaulay matrix of the system in a given degree, for one or more degrees. 

We now define Macaulay matrices, see also~\cite[Section~1.2]{BFS15}. 
Fix a term order on $R=\mathbb{F}[x_1,\ldots,x_n]$ and let $\mathcal{F}=\{f_1,\dots,f_m\}\subseteq R$ be a system of homogeneous polynomials.
The columns of the {\bf homogeneous Macaulay matrix} $M_d$ of $\mathcal{F}$ are labeled by the monomials of degree $d$ in $R$ and arranged in decreasing order. The rows of $M_d$ are labeled by polynomials of the form $m_{i,j}f_j$, where $m_{i,j}\in R$ is a monomial such that $\deg(m_{i,j}f_j)=d$.
The entry of $M_d$ in position $(i,j)$ is the coefficient of the monomial of column $j$ in the polynomial corresponding to the $i$-th row.

Let $f_1,\dots,f_m$ be arbitrary polynomials (not necessarily homogeneous).
The columns of the {\bf Macaulay matrix} $M_{\leq d}$ of $\mathcal{F}$ are labeled by the monomials of $R$ of degree $\leq d$, arranged in decreasing order. The rows of $M_{\leq d}$ correspond to polynomials of the form $m_{i,j}f_j$, where $m_{i,j}\in R$ is a monomial such that $\deg(m_{i,j}f_j)\leq d$. The entries of $M_{\leq d}$ are defined as in the homogeneous case. 
The rationale behind the use of homogeneous Macaulay matrices for homogeneous systems is that, for a homogeneous system, the Macaulay matrix $M_{\leq d}$ is a block matrix with blocks $M_d,\ldots,M_0$. 

In order to compute a Gr\"obner basis of the system $\mathcal{F}$, one computes the reduced row echelon form of the corresponding Macaulay matrix in one or more degrees. If the system is homogeneous, one uses homogeneous Macaulay matrices. For a sufficiently large degree, this produces a reduced Gr\"obner basis with respect to the chosen order. Some algorithms, such as MutantXL, use a variation called {\bf mutant strategy} in the non-homogeneous case: If row reduction of the Macaulay matrix $M_{\leq d}$ produces new polynomials $g_1,\ldots,g_{\ell}$ of degree strictly smaller than $d$, one appends to the reduction of $M_{\leq d}$ the polynomials $m_{i,j}g_j$, for every $j$ and every monomial $m_{i,j} \in R$ such that $\deg(m_{i,j}g_j) \leq d$. Then one computes the reduced row echelon form again. Throughout the paper, we refer to algorithms that employ the mutant strategy as {\bf mutant algorithms} and to the others as {\bf standard algorithms}. For a description of the basic version of these algorithms, we refer to~\cite[Section~3.1]{CG20} and~\cite[Section~1]{CG21}.

The complexity of computing the reduced row echelon form of the Macaulay matrices $M_{\leq d}$ and $M_d$ depends on their size, hence on the degree $d$. This motivates the next definition.

\begin{definition}\label{def:solvingdegree}
Let $\mathcal{F}=\{f_1,\ldots,f_m\}\subseteq R$ and let $\tau$ be a term order on $R$. The {\bf solving degree} of $\mathcal{F}$ with respect to $\tau$ is the least degree $d$ such that Gaussian elimination on the Macaulay matrix $M_{\leq d}$ produces a $\tau$-Gr\"obner basis of $\mathcal{F}$. We denote by $\solvdeg_{\tau}^s(\mathcal{F})$ the solving degree of $\mathcal{F}$ with respect to a standard algorithm and by $\solvdeg_{\tau}^m(\mathcal{F})$ the solving degree of $\mathcal{F}$ with respect to a mutant algorithm. When $\tau$ is the degree reverse lexicographic order, we omit the subscript $\tau$.
\end{definition}

It is well-known that the solving degree is not invariant under coordinate change. In addition, the solving degree may depend on the algorithm used to perform the Gr\"obner basis computation. In particular, for mutant algorithms it may be smaller than for standard ones. Finally, the solving degree depends on the choice of a term order on $R$. 

The complexity of linear-algebra-based algorithms is dominated by the cost of computing a degree reverse lexicographic Gr\"obner basis of the system, see~\cite[Sections 2 and 3]{CG20} for more detail. Therefore, an upper bound on the solving degree with respect to the reverse lexicographic order yields an upper bound on the complexity of computing a lexicographic Gr\"obner basis, hence on the complexity of solving the polynomial system. 

Let $I$ be a homogeneous ideal of $R$. For an integer $d\geq0$, we denote by $R_d$ the vector space generated by the degree $d$ monomials of $R$. We denote by $I_d=I\cap R_d$ the $\mathbb{F}$-vector space of homogeneous polynomials of degree $d$ in $I$.
For $g \in R$ a polynomial, we denote by $g^{\topp}$ the homogeneous part of $g$ of largest degree. For example, if $g=x^3+2xy^2-y+1 \in \FF[x,y]$, then $g^{\topp}=x^3+2xy^2$. For a polynomial system $\mathcal{F}=\{f_1,\ldots,f_m\}\subseteq R$, we denote by $\mathcal{F}^{\topp} \subseteq R$ the homogeneous system $\{f_1^{\topp},\ldots, f_m^{\topp}\}$. Up to performing Gaussian elimination on a matrix whose rows correspond to $f_1,\ldots,f_m$, we may suppose that 
\begin{center}
{\bf $f_1^{\topp},\ldots, f_m^{\topp}$ are linearly independent}.
\end{center}  
Throughout the paper, we always make this assumption.

The {\bf degree of regularity} was introduced in~\cite{B04,BFS04}. 

\begin{definition}\label{def:dreg}
Let $\mathcal{F}\subseteq R$ be a polynomial system.
The {\bf degree of regularity} of $\mathcal{F}$ is
$$d_{\reg}(\mathcal{F})=\left\{\begin{array}{cc}
\min\{d\geq 0\mid (\mathcal{F}^{\mathrm{top}})_d=R_d\} & \mbox{ if } \left(\mathcal{F}^{\mathrm{top}}\right)_d=R_d \mbox{ for $d\gg0$} \\
+\infty & \mbox{ otherwise.}
\end{array}\right.$$
\end{definition}


In the cryptographic literature, the degree of regularity is often assumed to be finite and is used as a proxy for the solving degree. This is the case, e.g., in the GeMSS specification documents~\cite{CFMPPR17}. However, this does not always produce reliable estimates, even in the case when the degree of regularity is finite. In fact, there are examples in which the degree of regularity is much smaller than the solving degree, see e.g.~\cite[Examples 3.2 and 3.3]{BDDGMT21}. 
However, a  result by Semaev and Tenti~\cite{ST21,T19} shows that, under suitable assumptions, the solving degree of a system which contains the field equations is at most twice the degree of regularity. 
Due to this result, an upper bound for the degree of regularity yields a proven upper bound for the solving degree of standard algorithms.

\begin{teor}[{\cite[Corollary 3.67]{T19} and~\cite[Theorem 2.1]{ST21}}] \label{teor:tenti:semaev}
Let $\mathcal{F} = \{f_1, \ldots, f_m, x_{1}^{q}-x_{1}, \ldots , x_{n}^{q}-x_{n}\} \subseteq R$ be a polynomial system. If $d_{\reg}(\mathcal{F}) \geq \max\{q, \deg(f_1),\ldots, \deg(f_m)\}$, then 
$$\solvdeg^s(\mathcal{F}) \leq 2d_{\reg}(\mathcal{F}) - 2.$$
\end{teor}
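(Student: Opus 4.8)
The plan is to reduce the statement to a single \emph{representation bound}: that every polynomial in $I=(\mathcal{F})$ of degree at most $d_{\reg}(\mathcal{F})$ can be written as $\sum_{i,j}c_{i,j}m_{i,j}f_j$, with $f_j$ ranging over $f_1,\ldots,f_m$ and the field equations, in such a way that every shift $m_{i,j}f_j$ has degree at most $2d_{\reg}(\mathcal{F})-2$. Granting this, here is why the theorem follows. Write $d=d_{\reg}(\mathcal{F})$. First, $\operatorname{in}(I)$ (with respect to the degree reverse lexicographic order) contains every monomial of degree $\geq d$: given a monomial $\mu$ of degree $e\geq d$, one has $(\mathcal{F}^{\topp})_e=R_e$ --- because $(\mathcal{F}^{\topp})_d=R_d$ and multiplying by $R_1$ propagates this upward --- so we may write $\mu=\sum_i h_if_i^{\topp}$ with $h_i$ homogeneous, and then $g=\sum_i h_if_i\in I$ satisfies $g=\mu+(\text{lower-degree terms})$, hence $g^{\topp}=\mu$ and (since the order refines the degree) $\operatorname{in}(g)=\mu$. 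Consequently the minimal monomial generators of $\operatorname{in}(I)$, equivalently the leading terms of the reduced Gr\"obner basis of $I$, all have degree $\leq d$. Each such minimal generator $\mu$ equals $\operatorname{in}(g)$ for some $g\in I$ with $\deg g\leq d$, and by the representation bound $g$ lies in the row span of $M_{\leq 2d-2}$; therefore $\mu$ appears as a pivot leading term of the reduced row echelon form of $M_{\leq 2d-2}$ (no proper divisor of $\mu$ lies in $\operatorname{in}(I)$, hence none can be a pivot). Hence the polynomials read off from that echelon form have leading terms generating $\operatorname{in}(I)$, i.e.\ they form a Gr\"obner basis, which is exactly the assertion $\solvdeg(\mathcal{F})\leq 2d-2$ for standard algorithms.

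It remains to prove the representation bound, and this is where the actual work lies. I would argue by descent on the largest degree $\delta$ occurring in an available representation of a fixed $g\in I$ with $\deg g\leq d$. If $\delta>2d-2$ (so in particular $\delta>\deg g$), then the top-degree parts of the representation cancel and assemble into a homogeneous syzygy, of degree $\delta$, of the system $\mathcal{F}^{\topp}$ (that is, of the $f_i^{\topp}$ together with the $x_i^q$). The key structural input is that this syzygy module is generated in low degree: since $R/\mathcal{F}^{\topp}$ is Artinian and concentrated in degrees $\leq d-1$, it has Castelnuovo--Mumford regularity $d-1$, so $\mathcal{F}^{\topp}$ has regularity $\leq d$ and --- using $d\geq q$ and $d\geq\max_i\deg f_i$ to control the degrees of its generators --- its first syzygy module is generated in degrees $\leq d+1$. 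Rewriting the degree-$\delta$ syzygy through these low-degree generators and substituting back lowers $\delta$, at the cost of introducing correction terms of the shape (element of $I$ of degree $\leq d$)$\,\times\,$(form of degree $\leq\delta-d-1$); bounding the degrees produced by iterating this process is the heart of the matter. An equivalent and perhaps more transparent bookkeeping device is homogenization: one bounds $\solvdeg(\mathcal{F})$ by a Castelnuovo--Mumford-type regularity of the homogenized ideal $I^{h}\subseteq\FF_q[x_0,x_1,\ldots,x_n]$ (compare~\cite{CG20}), notes that reducing $I^{h}$ modulo $x_0$ returns the Artinian ring $R/\mathcal{F}^{\topp}$, and identifies the only obstruction to transferring its regularity up to $I^{h}$ as the $x_0$-torsion submodule $(I^{h}:x_0)/I^{h}$ --- which is a module over $R/\mathcal{F}^{\topp}$ and records exactly the degree falls in representations of elements of $I$.

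I expect the main obstacle to be the \emph{sharpness} of the constant $2d-2$. The naive estimate --- the $S$-polynomial of two Gr\"obner basis elements of degree $\leq d$ has degree at most $2d$, and reductions do not raise the degree --- only gives $2d$; observing either that the two leading terms share a common factor, or that they are coprime and the top terms of the $S$-polynomial cancel, improves this to $2d-1$; extracting the final unit forces one to use the Artinian structure coming from the field equations in an essential way, concretely to show that the $x_0$-torsion of $\FF_q[x_0,\ldots,x_n]/I^{h}$ lives in degrees $\leq 2d-3$. This is precisely the step in which both hypotheses $d\geq q$ and $d\geq\max_i\deg f_i$ are genuinely needed; everything else --- the reduction in the first paragraph, and the degree accounting in the descent --- should be comparatively routine.
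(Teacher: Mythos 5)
First, a point of orientation: the paper does not prove Theorem~\ref{teor:tenti:semaev} at all --- it is imported verbatim from \cite[Corollary 3.67]{T19} and \cite[Theorem 2.1]{ST21} and used as a black box throughout --- so there is no internal proof to measure your attempt against; the cited proofs proceed by a direct analysis of the Macaulay matrices over $\FF_q$ in the presence of the field equations, not by the syzygy-theoretic route you sketch. Your first paragraph, taken on its own, is correct and complete: with $d=d_{\reg}(\mathcal{F})$ one has $(\mathcal{F}^{\topp})_e=R_e$ for all $e\geq d$, so the initial ideal of $I=(\mathcal{F})$ contains every monomial of degree $d$, its minimal generators live in degrees $\leq d$, each is the leading term of some $g\in I$ with $\deg g\leq d$, and granting the representation bound these $g$ lie in the row space of $M_{\leq 2d-2}$, whence the reduced row echelon form yields a Gr\"obner basis.

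The gap is that the representation bound \emph{is} the theorem, and the descent you propose does not establish it. Rewriting the top-degree syzygy through syzygy generators $s^{(k)}$ of degrees $e_k\leq d+1$ replaces the offending terms by $\sum_k b_kh_k$ with $h_k=\sum_i s_i^{(k)}f_i\in I$ of degree $\leq e_k-1\leq d$ and $\deg b_k=\delta-e_k$ (note this can be as large as $\delta-2$, not $\leq\delta-d-1$ as you wrote). Representing $b_kh_k$ via the tautological expression $h_k=\sum_i s_i^{(k)}f_i$ gives shifts of degree $\delta$ again, i.e.\ no progress; representing it by recursing on the element $b_kh_k\in I$ of degree $\leq\delta-1$ changes the quantity you are inducting on, and the degrees of the required shifts are not controlled by the induction hypothesis. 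Without additional input the degree falls can cascade --- this is exactly the phenomenon exhibited in \cite[Examples 3.2 and 3.3]{BDDGMT21}, where $\solvdeg$ vastly exceeds $d_{\reg}$, and which your argument would contradict if it closed without using the hypotheses. You correctly localize the missing ingredient (bounding the $x_0$-torsion of the homogenization, equivalently the degree falls, using the field equations and $d\geq\max\{q,\deg f_1,\ldots,\deg f_m\}$), but this is stated as an expectation rather than proved, and it is precisely the nontrivial content of Semaev and Tenti's result. The reduction is sound; the heart of the proof is missing.
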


Notice that in most systems of cryptographic interest, the field size and the degrees of the polynomials are relatively small. Therefore, one expects that Theorem~\ref{teor:tenti:semaev} applies to such systems.

Another recent result by Salizzoni~\cite{Sal23} shows that the solving degree of a mutant algorithm is at most the degree of regularity plus one, unless the system contains polynomials of large degree. 

\begin{teor}[{\cite[Proposition 3.10]{Sal23}}]\label{teor:salizzoni}
    Let $\mathcal{F}=\{f_1,\ldots,f_m\}\subseteq R$ be a polynomial system, then 
    $$\solvdeg^m(\mathcal{F}) \leq \max\{d_{\reg}(\mathcal{F}) +1, \deg(f_1), \ldots, \deg(f_m)\}.$$
\end{teor}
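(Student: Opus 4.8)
The plan is as follows. If $d_{\reg}(\mathcal{F})=+\infty$ there is nothing to prove, so assume $r:=d_{\reg}(\mathcal{F})<\infty$; set $I=\langle f_1,\ldots,f_m\rangle$, $\mathfrak{m}=(x_1,\ldots,x_n)$ and $D=\max\{r+1,\deg f_1,\ldots,\deg f_m\}$, and fix the degree reverse lexicographic order. Since a mutant algorithm, run at degree $\delta$, reduces at least the rows that a standard one reduces, it suffices to show that Gaussian elimination with the mutant strategy on the Macaulay matrix $M_{\leq D}$ returns a Gr\"obner basis of $I$.

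The first step is to exploit the hypothesis $(\mathcal{F}^{\topp})_r=R_r$. For each monomial $u$ of degree $r$, write $u=\sum_i g_i f_i^{\topp}$ with $g_i$ homogeneous of degree $r-\deg f_i$, and set $h_u:=\sum_i g_i f_i\in I$; then $h_u=u+(\text{terms of degree}<r)$, so $\mathrm{LT}(h_u)=u$, and $h_u$ lies in the $\FF$-span of the rows $m f_i$ of $M_{\leq r}$ with $\deg(m f_i)=r$. Two things follow. First, $\mathfrak{m}^r\subseteq\mathrm{LT}(I)$, so the monomial ideal $\mathrm{LT}(I)$ is minimally generated in degrees $\leq r$; hence a finite $G\subseteq I$ is a Gr\"obner basis of $I$ as soon as the leading term of every element of $I$ of degree $\leq r$ belongs to $\langle\mathrm{LT}(G)\rangle$. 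Second, since $r\leq D$, at every stage the polynomials produced by the algorithm realise, as leading terms, all monomials of degree $r$; consequently, if $V_D\subseteq R_{\leq D}$ denotes the $\FF$-span of the polynomials computed at degree $D$, it is enough to prove the inclusion $I\cap R_{\leq r}\subseteq V_D$.

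For this last inclusion I would argue through syzygies. Given $g\in I$ with $\deg g\leq r$, pick a representation $g=\sum_i h_if_i$ with $N:=\max_i\deg(h_if_i)$ as small as possible. If $N\leq D$ then $g$ is an $\FF$-combination of rows of $M_{\leq D}$, hence $g\in V_D$. If $N>D$, the degree-$N$ homogeneous components of the $h_if_i$ cancel (because $\deg g\leq r<N$), yielding a degree-$N$ element of the first syzygy module $Z$ of $(f_1^{\topp},\ldots,f_m^{\topp})$. Now $R/\langle\mathcal{F}^{\topp}\rangle$ is Artinian with top nonzero degree $r-1$, so $\reg\langle\mathcal{F}^{\topp}\rangle=r$ and therefore $\reg Z\leq\max\{\deg f_1,\ldots,\deg f_m,\,r+1\}\leq D$; thus $Z$ is generated in degrees $\leq D$, and expressing our syzygy through such generators produces elements $z_j\in I$ of degree $<D$ that lie in $V_D$. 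The mutant strategy re-multiplies these $z_j$ by monomials (staying within degree $D$) and re-reduces, and iterating this produces further low-degree elements of $I$ inside $V_D$; one wants to show that, after finitely many rounds, $g$ itself is absorbed into $V_D$, which finishes the proof.

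The step I expect to be the main obstacle is the last one: controlling the cascade of degree falls and verifying that every polynomial appearing during the iteration stays within degree $D$, so that the whole reduction takes place inside the span computed by the mutant algorithm at degree $D$ and not at some larger degree. This is exactly where the mutant re-multiplication (as opposed to a single standard Gaussian elimination) is indispensable, and it is the origin of the $+1$ in the statement: a naive single pass only handles representations with $N\leq D+1$, and genuine iteration is needed for larger $N$. An essentially equivalent formulation runs the same argument on the homogenisation $\langle f_1^h,\ldots,f_m^h\rangle\subseteq\FF[x_0,\ldots,x_n]$: lowering $N$ then corresponds to passing through elements divisible by $x_0$, the inequality $\reg Z\leq D$ becomes a Castelnuovo--Mumford regularity estimate obtained by reducing modulo the hyperplane $x_0=0$ (where one recovers $\langle\mathcal{F}^{\topp}\rangle$ and its Artinian quotient), and the obstacle reappears as the need to match, degree by degree up to $D$, the mutant re-multiplication with the homogeneous Macaulay-matrix computation.
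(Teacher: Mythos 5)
First, note that the paper does not prove this statement at all: it is quoted verbatim from~\cite[Proposition 3.10]{Sal23}, so there is no internal proof to compare your attempt against. Judged on its own, your skeleton is the right one and its algebraic core is correct: from $(\mathcal{F}^{\topp})_r=R_r$ you correctly get $\mathfrak{m}^r\subseteq\mathrm{LT}(I)$ for the degree-compatible order DRL, hence that it suffices to capture $I\cap R_{\leq r}$ in the computed row space; and the regularity estimate is right, since $R/\langle\mathcal{F}^{\topp}\rangle$ is Artinian with top degree $r-1$, so $\reg(R/\langle\mathcal{F}^{\topp}\rangle)=r-1$, $\reg\langle\mathcal{F}^{\topp}\rangle=r$, and the short exact sequence $0\to Z\to\bigoplus_i R(-\deg f_i)\to\langle\mathcal{F}^{\topp}\rangle\to 0$ gives $\reg Z\leq\max\{\deg f_i,\,r+1\}=D$. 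This is genuinely where the $+1$ comes from.

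The gap is the one you flag yourself, and it is not a technicality: it is the entire content of the theorem. Two things are missing. (i) A single application of the syzygy argument replaces a representation of top degree $N>D$ by one of top degree $N-1$ \emph{in terms of an enlarged system} $\{f_i\}\cup\{z_j\}$; substituting $z_j=\sum_i s_{ji}f_i$ back destroys the degree drop, so you must genuinely iterate on the enlarged systems. For that you need the regularity bound to persist: you should check that $\langle f_i^{\topp}\rangle+\langle z_j^{\topp}\rangle$ still has Artinian quotient with top degree $\leq r-1$ (it does, being larger than $\langle\mathcal{F}^{\topp}\rangle$) and that the new generators have degree $<D$, so the first syzygies of every enlarged system are again generated in degrees $\leq D$; this gives termination after $N-D$ rounds with all products of degree $\leq D$. (ii) You must then match these rounds to the algorithm: the mutant strategy only re-multiplies the \emph{new} RREF rows of degree $<D$, not arbitrary elements of the row space, so you need the observation that (for DRL and columns ordered by the term order) any $v$ in the row space with $\deg v=e<D$ is a combination of RREF rows of degree $\leq e$, whence after one mutant round the row space contains every $mv$ with $\deg(mv)\leq D$. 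Only with both (i) and (ii) does ``$g$ is absorbed into $V_D$'' follow; as written, your proposal asserts the conclusion of the induction rather than carrying it out, and it is precisely at this step that a standard (non-mutant) algorithm fails and the weaker bound of Theorem~\ref{teor:tenti:semaev} becomes necessary.
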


Notice that, for both standard and mutant algorithms, it makes sense to assume that 
\begin{equation}\label{eqn:sd}
    \solvdeg(\mathcal{F}) \geq \max\{\deg(f_1), \ldots, \deg(f_m)\}.
\end{equation} 
In fact, any equation of $\mathcal{F}$ of degree greater than the solving degree may be removed from the system without altering the Gr\"obner basis computation. This is of course delicate in practice, as one may not have an a priori estimate on the solving degree that allows them to eliminate equations from the system. 
Notice moreover that most multivariate systems associated to cryptographic schemes have low degree, so (\ref{eqn:sd}) is usually satisfied. 

Another algebraic invariant connected to the solving degree of a polynomial system is the {\bf Castelnuovo-Mumford regularity}. Throughout the paper, we denote by $\reg(I)$ the Castelnuovo-Mumford regularity of the ideal $I$.
We refer the reader to~\cite[Section 3.4]{CG20} and~\cite{CG21} for its definition and a discussion of its relationship with the other invariants.
Notice that, if $d_{\reg}(\mathcal{F})<+\infty$, then $$d_{\reg}(\mathcal{F}) = \reg(\mathcal{F}^{\topp}).$$ 

\section{Systems that contain a regular sequence in a given degree} \label{sec:randomness}

Consider a system $\mathcal{F}$ of $m$ equations 
in $n$ variables. 
As we discussed in the previous section, whenever the degree of regularity of $\mathcal{F}$ is finite, an upper bound on it directly translates into an upper bound on the solving degree of the system, hence on the complexity of computing its solutions. 
In this paper, we prove upper bounds on the degree of regularity of a system $\mathcal{F}$ such that the ideal generated by $\mathcal{F}^{\topp}$ contains a regular sequence of polynomials of degree $D$.  
In the rest of this section, we discuss the property of containing a regular sequence in a given degree and in which sense this is a generic property.
The next definition formalizes the correspondence between homogeneous systems and points in a product of projective spaces.

\begin{definition}\label{defn:openset}
Let $\FF$ be a field, and let $\overline{\FF}$ be its algebraic closure. Denote by $\mathbb{P}^t$ the $t$-dimensional projective space over $\overline{\FF}$.
Let $d_1,\ldots,d_m$ be positive integers. For a homogeneous polynomial $f$, denote by $[f]\in\mathbb{P}^{\binom{n+\deg(f)-1}{n-1}-1}$ the projective point whose coordinates are the coefficients of $f$. A homogeneous system $\mathcal{F}=\{f_1,\ldots,f_m\}\subseteq R$ with $\deg(f_i)=d_i$ for $1\leq i\leq m$ corresponds to a point $([f_1],\ldots,[f_m])\in\mathbb{P}^{\binom{n+d_1-1}{n-1}-1}\times\ldots\times\mathbb{P}^{\binom{n+d_m-1}{n-1}-1}$.
\end{definition}

A {\bf random system} of degrees $d_1,\ldots,d_m$ over a field $\mathbb{F}$ is a system of polynomials of the given degrees whose coefficients are chosen uniformly at random in the given field. Similarly, a {\bf random homogeneous system} of degrees $d_1,\ldots,d_m$ is a system of homogeneous polynomials of the given degrees, whose coefficients are chosen uniformly at random. The concept of random system is used extensively in the cryptographic literature, where random instances of structured polynomial systems are often assumed to behave as random systems.

Over an infinite field, a {\bf generic property} in the sense of algebraic geometry is by definition a property which holds over a nonempty Zarisky-open set. 
We refer to \cite[Section 1.3]{CG20} for a short reminder on the Zarisky topology. 
Since every nonempty Zarisky-open set is dense, the property holds for ``almost every system''. Therefore, over an infinite field, genericity is the algebraic-geometric analog of being true with high probability.  
Over a finite field $\FF_q$, however, 
estimating the probability that a system has a given property amounts to estimating the number of $\FF_q$-rational points of the open set that corresponds to the property over the algebraic closure, a task which is often far from trivial.
Nevertheless, if $q$ is large enough (or if we consider a finite extension of $\FF_q$ of large enough cardinality), a random homogeneous system belongs to any given nonempty Zarisky-open set with high probability. For the convenience of the reader, we show how to deduce this from known results. For simplicity, we only state the argument for one projective space, but the same argument can be applied to a product of projective spaces.

\begin{prop}
Let $\mathbb{P}^t$ be the $t$-dimensional projective space over $\bar{\FF_q}$. Let $\mathcal{U}\subseteq\mathbb{P}^t$ be a nonempty Zarisky-open set defined over $\FF_q$. If $q$ is sufficiently large, then a randomly chosen $\FF_{q}$-rational point $P\in\mathbb{P}^t$ belongs to $\mathcal{U}$ with high probability.
\end{prop}

\begin{proof}
Let $\mathcal{X}$ be the set-theoretic complement of $\mathcal{U}$ in $\mathbb{P}^t$. Let $Q$ be a power of $q$. It suffices to show that the probability that a randomly chosen $\FF_Q$-rational point belongs to $\mathcal{X}$ tends to zero as $Q$ tends to infinity. Since $\mathcal{X}$ is a projective algebraic variety, it follows from \cite[Theorem 3.1]{C16} that the number $|\mathcal{X}(\FF_Q)|$ of $\FF_Q$-rational points of $\mathcal{X}$ is bounded from above by a polynomial $p(Q)$, whose coefficients depend on the degrees and dimensions of the irreducible components of $\mathcal{X}$. In particular, 
$$\deg p(Q)\leq \max\{\dim(\mathcal{X}),2\dim(\mathcal{X})-t\}\leq t-1.$$ Since the number of $\FF_Q$-rational points of $\mathbb{P}^t$ is $\sum_{i=0}^t Q^i$, then the probability that an $\FF_Q$-rational belongs to $\mathcal{X}$ is bounded from above by $$\frac{|\mathcal{X}(\FF_Q)|}{|\mathbb{P}^t(\mathbb{\FF_Q})|}=\frac{p(Q)}{\sum_{i=0}^t Q^i}.$$
Hence, the probability tends to zero as $Q$ tends to infinity.
\end{proof}

In the cryptographic literature, random polynomial systems are often assumed to be cryptographic semiregular sequences, see e.g.~\cite{B04, BFS03}. This is the case in the cryptoanalysis of several systems, as e.g.~\cite{CFMPPR17}. The next definition appears in~\cite[Definition 3.2.1 and Definition 3.2.4]{B04},~\cite[Definition 5]{BFS04}, and~\cite[Definition 5 and Definition 9]{BFSY05}.

\begin{definition}\label{defn:semireg}\label{defn:semireg:FF2}
Let $\mathcal{F}=\{f_1, \ldots, f_m\}\subseteq R$ be a homogeneous system.

If $\FF\neq\FF_2$, we say that $f_1, \ldots, f_m$ are a {\bf cryptographic semiregular sequence} if for all $1\leq i \leq m$ and all $g_i\in R$ such that $g_if_i\in(f_1, \ldots, f_{i-1})$ and $\deg(g_i f_i) < d_{\reg}(\mathcal{F})$, we have $g_i\in(f_1, \ldots, f_{i-1})$.

If $\FF=\FF_2$, we say that $f_1, \ldots, f_m\in R/(x_1^2,\ldots,x_n^2)$ are a {\bf cryptographic semiregular sequence} if for all $1\leq i \leq m$ and all $g_i\in R/(x_1^2,\ldots,x_n^2)$ such that $g_i f_i \in(f_1, \ldots, f_{i-1})$ and $\deg(g_i f_i) < d_{\reg}(\mathcal{F}\cup\{x_1^2,\ldots,x_n^2\})$, we have $g_i\in(f_1, \ldots, f_i)$.

A sequence of polynomials (not necessarily homogeneous) $f_1, \ldots, f_m$ is a {\bf cryptographic semiregular sequence} if $f_1^{\topp}, \ldots, f_m^{\topp}$ are a cryptographic semiregular sequence. 
\end{definition}

In this paper, we use the expression cryptographic semiregular sequence in order to distinguish the concept of semiregularity used in the cryptographic literature from the concept of semiregularity originally introduced by Pardue~\cite{Par99,Par10}, which inspired it. The original definition by Pardue is given over an infinite field $\FF$. As we are interested in dealing with finite fields, we extend it to an arbitrary field in the natural way.

\begin{definition}
Let $\FF$ be a field, and let $R=\FF[x_1,\ldots,x_n]$. Let $I$ be a homogeneous ideal, and let $A=R/I$. A polynomial $f\in R_d$ is {\bf semiregular} on $A$ if for every $e\geq d$, the vector space map $A_{e-d}\rightarrow A_e$ given by multiplication by $f$ has maximal rank (that is, it is either injective or surjective).
A sequence of homogeneous polynomials $f_1,\ldots,f_m$ is a {\bf semiregular sequence} if $f_i$ is semiregular on $A/(f_1,\ldots,f_{i-1})$ for all $1\leq i \leq m$.
\end{definition}

It follows from~\cite[Proposition 1]{Par10} that, if $\FF\neq\FF_2$, then a semiregular sequence is also a cryptographic semiregular sequence. The converse does not hold, as shown in~\cite{Par10}, see the example just below~\cite[Proposition 1]{Par10}.

The next proposition presents a simple situation in which cryptographic semiregular sequences and semiregular sequences coincide.

\begin{prop}\label{semireg=crypto}
Let $q>2$ and let $f_1,\ldots,f_{n+1}\in R=\FF_q[x_1,\ldots,x_n]$ be polynomials of degree $d_1,\ldots,d_{n+1}$, respectively. Assume that $f_{1}^{\topp},\ldots,f_n^{\topp}$ is a regular sequence. The sequence $f_1,\ldots,f_{n+1}$ is cryptographic semiregular if and only if the sequence $f_{1}^{\topp},\ldots,f_{n+1}^{\topp}$ is semiregular. In particular, the sequence $x_1^q-x_1,\ldots,x_n^q-x_n,f$ is cryptographic semiregular if and only if the sequence $x_1^q,\ldots,x_n^q,f^{\topp}$ is semiregular.
\end{prop}

\begin{proof}
Let $\mathcal{F}=\{f_1,\ldots,f_{n+1}\}$.
It suffices to show that if the sequence $f_1,\ldots,f_{n+1}$ is cryptographic semiregular, then the sequence $f_{1}^{\topp},\ldots,f_{n+1}^{\topp}$ is semiregular. Since $f_{1}^{\topp},\ldots,f_n^{\topp}$ is a regular sequence, it suffices to show that $f_{n+1}^{\topp}$ is semiregular on $\FF_q[x_1,\ldots,x_n]/(f_{1}^{\topp},\ldots,f_n^{\topp})$.
If $f_1,\ldots,f_{n+1}$ is cryptographic semiregular, then by~\cite[Proposition 6]{BFSY05} the Hilbert series of $R/(\mathcal{F}^{\topp})$ is $[(1-z^{d_{n+1}})\prod_{i=1}^n(1+z+\ldots+z^{d_i-1})]$. Here, for $p(z)=\sum_{i=0}^{\infty} p_iz^i\in\mathbb{Z}[[z]]$, we denote by $\delta(p)=\min\{i\geq 0\mid p_i\leq 0\}-1$ and define $[p(z)]=\sum_{i=0}^{\delta(p)} p_iz^i$. Then $f_{n+1}^{\topp}$ is semiregular on $\FF_q[x_1,\ldots,x_n]/(f_{1}^{\topp},\ldots,f_n^{\topp})$ and $\mathcal{F}^{\topp}$ is a semiregular sequence by~\cite[Proposition 1]{Par10}.
\end{proof}

Pardue in~\cite{Par99,Par10} shows that Fr\"oberg's Conjecture~\cite{Fr85}, a conjecture which has attracted a lot of attention within the commutative algebra community and is widely believed to hold, is equivalent to the following
\begin{conjecture}\label{conj:Froeberg}
Let $\FF$ be an infinite field.
A generic sequence of polynomials of degrees $d_1,\ldots,d_m$ in $R=\FF[x_1,\ldots,x_n]$ is semiregular.
\end{conjecture}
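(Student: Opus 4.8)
The plan is to deduce genericity of semiregularity from the \emph{existence} of a single semiregular sequence of the prescribed degrees, using semicontinuity. Fix $n$ and $d_1,\ldots,d_m$ and regard a sequence $(f_1,\ldots,f_m)$ as a point of the affine space $V=R_{d_1}\times\cdots\times R_{d_m}$ of its coefficient vectors. For each degree $e$, the dimension of $\left(R/(f_1,\ldots,f_m)\right)_e$ is the corank of a matrix whose entries are polynomial in the coefficients of the $f_i$, so the Hilbert function $e\mapsto\dim\left(R/(f_1,\ldots,f_m)\right)_e$ is upper semicontinuous on $V$. A sequence is semiregular precisely when, in every degree, each multiplication map has maximal rank, which is equivalent to the Hilbert function of $R/(f_1,\ldots,f_m)$ attaining the truncation $\left[\prod_{i=1}^m(1-z^{d_i})/(1-z)^n\right]$, where $[\cdot]$ discards all terms from the first non-positive coefficient on. This truncation is an a priori lower bound for the Hilbert function of \emph{every} sequence of these degrees (multiplication maps can be at most surjective), so the minimal Hilbert function over $V$ is at least the truncation, and equals it exactly when some point attains it. Since the locus on which an upper-semicontinuous integer function attains its minimum is Zariski-open, and $\FF$ is infinite so nonempty open sets are dense, the statement reduces to: for every $n$ and every choice of degrees there is at least one semiregular sequence.

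For existence I would argue by induction on $m$, building the sequence one form at a time. Assume $f_1,\ldots,f_{i-1}$ already semiregular and write $A=R/(f_1,\ldots,f_{i-1})$; it then suffices to produce a single form $f_i$ of degree $d_i$ for which every multiplication map $A_{e-d_i}\xrightarrow{\,f_i\,}A_e$ has maximal rank. The natural tool is specialization to a combinatorially tractable configuration: degenerate $A$ to a monomial quotient, or pass through Macaulay inverse systems and apolarity, so that the maximal-rank condition on the multiplication maps becomes a statement about a general linear form or a general annihilator acting on an explicit graded module. When such a specialization can be controlled, maximal rank in the outer degrees follows from the dimension count and the inductive step closes. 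This strategy already succeeds in the settled ranges — regular sequences ($m\le n$), the case $n\le 3$, and the almost-complete-intersection case $m=n+1$ — which I would use both as a sanity check and as the base for more elaborate degenerations.

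The hard part, and the step on which the whole plan turns, is exactly the maximal-rank property in the \emph{middle} degrees, where neither injectivity nor surjectivity is forced by a dimension count. This is where the argument genuinely bites: by Pardue~\cite{Par10} the existence of a semiregular sequence for all degrees is equivalent to Fröberg's Conjecture~\cite{Fr85}, so the decisive obstacle is precisely to establish that a general form induces a maximal-rank map in these intermediate degrees. Two tempting shortcuts are not available in the needed generality. First, degenerating the $f_i$ to a monomial or lex ideal does not transport the conclusion, because maximal rank of multiplication is \emph{not} preserved under such flat degenerations — the special fiber typically has a strictly larger Hilbert function. Second, the route through the Weak Lefschetz Property of a monomial complete intersection, which would imply the required maximal rank, is itself unresolved for $n\ge 4$ and arbitrary degrees. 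The realistic shape of the plan is therefore to reduce cleanly (via the semicontinuity step above) to the middle-degree maximal-rank statement, and then to attack that statement degree by degree through whatever specialization or Lefschetz-type input can be made to control the general form; the success of the plan in full generality hinges entirely on that middle-degree input.
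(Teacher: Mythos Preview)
The statement you are trying to prove is labeled a \emph{Conjecture} in the paper, and the paper offers no proof of it whatsoever. The surrounding text simply records that Pardue~\cite{Par99,Par10} showed this statement to be equivalent to Fr\"oberg's Conjecture~\cite{Fr85}, which is open. So there is no ``paper's own proof'' to compare your proposal against.

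Your proposal is, to your credit, honest about this. The semicontinuity reduction in your first paragraph is correct and is essentially the easy direction of Pardue's equivalence: the locus where the Hilbert function is minimal is open, so genericity of semiregularity follows from the existence of a single semiregular sequence in the given degrees. But you then acknowledge explicitly that the existence step ``is equivalent to Fr\"oberg's Conjecture'' and that ``the success of the plan in full generality hinges entirely on that middle-degree input.'' In other words, your plan reduces the statement to an open problem and stops. That is not a proof; it is an explanation of why the statement is a conjecture. The gap is not a subtle technical oversight but the entire content of the claim: producing a semiregular sequence of arbitrary prescribed degrees in $n\geq 4$ variables is precisely what no one knows how to do.

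One smaller point: your assertion that Pardue-semiregularity is equivalent to the Hilbert series equaling the truncation $\left[\prod_{i=1}^m(1-z^{d_i})/(1-z)^n\right]$ is not immediate from the definition. Pardue's condition asks that \emph{each} multiplication map $A/(f_1,\ldots,f_{i-1})_{e-d_i}\to A/(f_1,\ldots,f_{i-1})_e$ have maximal rank, which is a priori stronger than the cumulative Hilbert-series statement (the latter is closer to the cryptographic notion of Definition~\ref{defn:semireg}). The equivalence is one of the results of~\cite{Par10}, and you should invoke it rather than treat it as definitional.
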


In other words, Fr\"oberg's Conjecture is equivalent to the statement that the set of semiregular sequences of polynomials of given degrees contains a dense Zarisky-open set.
If the conjecture is true, then a sequence of polynomials of given degrees is semiregular with high probability, provided that the ground field has large enough cardinality. It follows that, if $\FF=\FF_q$ with $q\gg 0$ and Fr\"oberg's Conjecture holds, then a sequence of polynomials of given degrees is a cryptographic semiregular sequence with high probability. In addition, for $q\gg 0$ most cryptographic semiregular sequences are also semiregular sequences, as the set of semiregular sequences conjecturally contains a dense open set.

In~\cite[Section 3.2]{B04}, Bardet conjectures that a sequence of polynomials with coefficients in $\mathbb{F}_2$ is cryptographic semiregular with high probability. This conjecture is motivated by experimental evidence, see also~\cite[Conjecture 2]{BFS03}. The conjecture was later disproved by Hodges, Molina, and Schlather, who in~\cite{HMS17} prove that there are choices of the parameters for which no cryptographic semiregular sequence exists over $\FF_2$. This is the case, e.g., for $m=1$ and $n>3d_1$. 

In this paper, we study the Zariski-open set consisting of systems of $m$ polynomials in $n$ variables which generate an ideal that contains a regular sequence of $n$ polynomials in a given degree $D$. 

\begin{definition}\label{defn:regD}
Let $\FF$ be a field and let $\overline{\FF}$ be its algebraic closure. Denote by $\mathbb{P}^t$ the $t$-dimensional projective space over $\overline{\FF}$.
Fix $m\geq n\geq 1$ and $1\leq d_1\leq\ldots\leq d_m$.
Let $\mathcal{U}_D\subseteq\mathbb{P}^{\binom{n+d_1-1}{n-1}-1}\times\ldots\times\mathbb{P}^{\binom{n+d_m-1}{n-1}-1}$ be the set of $([g_1],\ldots,[g_m])\in\mathbb{P}^{\binom{n+d_1-1}{n-1}-1}\times\ldots\times\mathbb{P}^{\binom{n+d_m-1}{n-1}-1}$ such that the ideal $(g_1,\ldots,g_m)\subseteq\overline{\FF}[x_1,\ldots,x_n]$ contains a regular sequence of $n$ polynomials of degree $D$. 
We say that the system $\mathcal{F}$ is {\bf regular in degree $D$} if $([f_1^{\topp}],\ldots,[f_m^{\topp}])\in \mathcal{U}_D$.
\end{definition}

Fix $m\geq n\geq 1$, $1\leq d_1\leq\ldots\leq d_m$, and $\mathbb{F}$ a field. 
If $D<d_n$, then $\mathcal{U}_D=\emptyset$. In fact, let $\mathcal{F}=\{f_1,\ldots,f_m\}\subseteq R$ with $\deg(f_i)=d_i$. Then $(f_1^{\topp},\ldots,f_m^{\topp})_D=(f_1^{\topp},\ldots,f_{n-1}^{\topp})_{D}$ cannot contain a regular sequence of $n$ polynomials. Therefore, the system $\mathcal{F}$ can be regular in degree $D$, only if $D\geq d_n$.

The next proposition provides some useful facts connected to the property of being regular in a given degree. The statement follows from standard facts from commutative algebra. Recall that a system defined over a field $\FF$ is {\bf zero-dimensional} if it has finitely many solutions over $\overline{\FF}$.

\begin{prop}\label{prop:regD}
Let $\mathbb{F}$ be a field, and let $m\geq n\geq 1$ and $1\leq d_1\leq\ldots\leq d_m$. Let $\mathcal{F}=\{f_1,\ldots,f_m\}\subseteq R$ with $\deg(f_i)=d_i$. The following are equivalent:
\begin{enumerate}
\item $d_{\reg}(\mathcal{F})<+\infty$,
\item $\mathcal{F}^{\topp}$ is zero-dimensional,
\item $\mathcal{F}^{\topp}\in\mathcal{U}_D$ for some $D\geq 0$,
\item $\mathcal{F}^{\topp}\in\mathcal{U}_D$ for all $D\geq d_m$.
\end{enumerate}
In particular, if $d_1=\ldots=d_m=D$ and $\mathcal{F}^{\topp}$ is zero-dimensional, then $\mathcal{F}$ is regular in degree $D$. 
\end{prop}

It follows from Proposition~\ref{prop:regD} that, if $m=n$, a system $\mathcal{F}$ has finite degree of regularity
if and only if $\mathcal{F}^{\topp}$ is a regular sequence. Since this case is well-studied, in the sequel we often focus on the case where $m>n$. 

Notice moreover that, if $\mathcal{F}$ is cryptographic semiregular, then it satisfies the equivalent conditions of Proposition~\ref{prop:regD}. In other words, every cryptographic semiregular system is regular in degree $D$ for every $D\geq d_m$ (and possibly in smaller degree). However, the converse does not hold.

The statement that every cryptographic semiregular sequence is regular in some degree also implies that being regular in degree $D$ is a generic property, conditionally to Conjecture \ref{conj:Froeberg}. More precisely, Proposition~\ref{prop:regD} implies that, if $D\geq d_n$, then $\mathcal{U}_D$ contains a dense open set of $\mathbb{P}^{\binom{n+d_1-1}{n-1}-1}\times\ldots\times\mathbb{P}^{\binom{n+d_m-1}{n-1}-1}$. 

Even over a finite field, one advantage of considering systems which are regular in degree $D$ is that, unlike cryptographic semiregular systems \cite{HMS17}, they exist over every field and for any choice of $m\geq n$ and $D\geq d_n$. In other words, if the set $\mathcal{U}_D$ is nonempty over $\overline{\mathbb{F}_q}$, then it always contains $\mathbb{F}_q$-rational points. 

\begin{example}
The set $\mathcal{U}_D$ contains $\mathbb{F}_q$-rational points for any choice of $n,m,q$ and $1\leq d_1\leq\ldots\leq d_m$ with $n\leq m$ and $D\geq d_n$. This follows from the existence of homogeneous regular sequences of any given degrees in $\mathbb{F}_q[x_1,\ldots,x_n]$.
Some homogeneous regular sequences in $\mathbb{F}_q[x_1,\ldots,x_n]$ of degrees $d_1,\ldots,d_n$ are for example $$x_1^{d_1}+g_1,x_2^{d_2}+g_2,\ldots,x_n^{d_n}+g_n,$$ where $g_i\in\mathbb{F}_q[x_{i+1},\ldots,x_n]_{d_i}$. 
It follows that the system $$\{x_1^{d_1}+g_1,x_2^{d_2}+g_2,\ldots,x_n^{d_n}+g_n,f_{n+1},\ldots,f_m\}$$ is regular in degree $D$ for every $D\geq d_n$ and for any $f_{n+1},\ldots,f_m$ of the appropriate degrees.
\end{example}

We stress that Proposition~\ref{prop:regD} applies to most zero-dimensional systems, since for most zero-dimensional systems over an algebraically closed field, the top degree part of the system $\mathcal{F}^{\topp}$ is also zero-dimensional. This statement can be formalized using the algebraic geometric language of dense open sets. Nevertheless, there exist zero-dimensional systems $\mathcal{F}$ such that $\mathcal{F}^{\topp}$ is not zero-dimensional. 

\begin{example}\label{ex:zerodimnotinU}
Let $f_1=x^3$, $f_2=xy^2+y^2$, $f_3=x^2y$. The system is zero-dimensional, as $\sqrt{(f_1,f_2,f_3)}=(x,y)$. However, the system is not regular in degree $3$, since $(f_1^{\topp},f_2^{\topp},f_3^{\topp})=(x^3,x^2y,xy^2)$ does not contain a regular sequence in degree $3$. In fact, $(f_1^{\topp},f_2^{\topp},f_3^{\topp})$ is not even zero-dimensional, as $\sqrt{(f_1^{\topp},f_2^{\topp},f_3^{\topp})}=(x)$.
\end{example}

\begin{example}
The system $\mathcal{F}\cup\{x_1^q-x_1, \ldots,x_n^q-x_n\}$ is regular in degree $q$, for any system $\mathcal{F}$. In fact, $(f_1^{\topp},\ldots,f_m^{\topp},x_1^q,\ldots,x_n^q)_q$ contains the regular sequence $x_1^q,\ldots,x_n^q$.
\end{example}

\section{Bounds on the degree of regularity} \label{sec:result}

Throughout the section, $\mathcal{F}=\{f_1,\ldots,f_m\}\subseteq R$ denotes a system of $m$ polynomials in $n$ variables. We always assume that $f_1^{\topp},\ldots,f_m^{\topp}$ are linearly independent and $m\geq n$. Notice that 
$$d_{\reg}(\mathcal{F})\geq d_n$$ by definition. 
In this section, we establish an upper bound for the degree of regularity of a system $\mathcal{F}$ of polynomials of equal degree $D$, under the assumption that $\mathcal{F}$ is regular in degree $D$. 
We also prove an upper bound for the degree of regularity of a system $\mathcal{F}$ of polynomials of equal degree $D$, for $D\geq q$. In combination with Theorem~\ref{teor:tenti:semaev} and Theorem~\ref{teor:salizzoni}, these results provide us with upper bounds for the solving degree of $\mathcal{F}$. 

\begin{remark}
Let $\mathcal{F}$ be a system of polynomials with $f_1^{\topp},\ldots,f_m^{\topp}$ linearly independent of degree $D$. Let $S$ and $T$ be invertible square matrices of size $n$ and $m$. 
Then $(T\circ \mathcal{F}\circ S)^{\topp}=T\circ\mathcal{F}^{\topp}\circ S$, so $\mathcal{F}$ is regular in degree $D$ if and only if $T\circ\mathcal{F}\circ S$ is regular in degree $D$. In other words, when deciding whether a system of 
polynomials of equal degree $D$ is regular in degree $D$, one can ignore the random linear transformations $S$ and $T$ used to disguise the internal system $\mathcal{F}$. 
\end{remark}

In~\cite[Section 4]{BDDGMT21}, the authors provide an upper bound for the degree of regularity of a system of quadratic polynomials that contains a regular sequence. In this section, we follow the same basic approach and extend it to systems of polynomials of the same degree and systems of polynomials of the same degree to which one adds the field equations. The cases we treat in this paper are technically more challenging and require the use of more sophisticated results from commutative algebra. 

We start by introducing the family of lex-segment ideals. A conjecture by Eisenbud, Green, and Harris will allow us to reduce to these ideals, when estimating the regularity of ideals generated by systems which are regular in degree $D$, for some $D>0$. Throughout the section, we fix the lexicographic term order on $R$ with $x_1>x_2>\ldots>x_n$.

\begin{definition}\label{def:LPP}
A monomial ideal $I\subseteq R$ is a {\bf lex-segment ideal} if it has the property that if $u,v\in R$ are monomials of the same degree such that $u\geq_{\lex} v$ and $v\in I$, then $u\in I$.  

Let $C$ and $c_1 \leq\ldots\leq c_n$ be non negative integers. An ideal $\mathcal{L}\subseteq R$ is a {\bf $(c_1,\ldots,c_n;C)$-LexPlusPowers (LPP) ideal} if $\mathcal{L}=(x_1^{c_1},\ldots,x_n^{c_n})+L$, where $L$ is a lex-segment ideal generated in degree $C$.
\end{definition}

\begin{notation}\label{def:I:LPP}
Let $I \subseteq R$ be a homogeneous ideal containing a regular sequence of polynomials of degrees $c_1 \leq \ldots \leq c_n$. For each $C\geq 0$, we denote by $\LPP(I;c_1,\ldots,c_n;C)$ the $(c_1,\ldots,c_n;C)$-LPP ideal $\mathcal{L}=(x_1^{c_1},\ldots,x_n^{c_n})+L$ such that $\dim(I_C)=\dim(\mathcal{L}_C)$. We make $L$ unique by choosing the largest lex-segment ideal generated in degree $C$ for which the equality $\mathcal{L}=(x_1^{c_1},\ldots,x_n^{c_n})+L$ holds. 
\end{notation}

\begin{example}
Let $I\subseteq\FF[x,y,z]$ be generated by a homogeneous regular sequence of polynomials of degrees $1,3,4$. The $(1,3,4;3)$-LPP ideal $\mathcal{L}$ such that $\dim(I_3)=7=\dim(\mathcal{L}_3)$ is 
$\mathcal{L}=(x,y^3,z^4)$.

For any lex-segment ideal $L$ generated in degree $3$ with $\dim(L_3)\leq 7$, one has $L\subseteq\mathcal{L}$. Hence, one may write $$\mathcal{L}=(x,y^3,z^4)+L$$ as in Definition~\ref{def:LPP} and choose, e.g., $L= (x^3,x^2y,x^2z, xy^2,xyz,xz^2)$, or $L=(x^3)$.
However, Notation~\ref{def:I:LPP} prescribes that we choose the largest $L$ with respect to containment, i.e., $$L=(x^3,x^2y,x^2z, xy^2,xyz,xz^2,y^3)$$ is the lex-segment ideal generated by the first $7$ cubic monomials in the lexicographic order.
\end{example}

The next conjecture appears as~\cite[Conjecture ($V_m$)]{EGH93}. See also~\cite[Conjecture 2.6]{R04} for an equivalent formulation, that is closer to the formulation that we adopt here. The conjecture was settled in several cases and it is widely believed to hold within the commutative algebra community. For an introduction to the conjecture and an excellent survey of known cases, we refer the interested reader to~\cite{CDS}. Here we state it in a weak form, which is what we need in the sequel. 

\begin{conjecture}[Eisenbud-Green-Harris Conjecture]\label{conj:EGH}
Let $I \subseteq R$ be a homogeneous ideal containing a regular sequence of polynomials of degrees $c_1 \leq \ldots \leq c_n$. Then $$\reg(I) \leq \reg(\LPP(I;c_1,\ldots, c_n;C)) $$ for all $C\geq c_n$.
\end{conjecture}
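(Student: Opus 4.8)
The plan is to reduce Conjecture~\ref{conj:EGH} to the case where $I$ contains the monomial complete intersection $\mathfrak{a}=(x_1^{c_1},\dots,x_n^{c_n})$ itself, and then to invoke two classical results on ideals containing $\mathfrak{a}$: the Clements--Lindstr\"om theorem, which says that a lex-plus-powers ideal realizes the fastest possible Hilbert-function growth among such ideals, and the lex-plus-powers analogue of the Bigatti--Hulett--Pardue theorem, which says that among such ideals with a fixed Hilbert function the lex-plus-powers ideal has the largest graded Betti numbers, hence the largest Castelnuovo--Mumford regularity. (Both ingredients, and the status of the reduction below, are surveyed in \cite{CDS}; the conjecture originates in \cite{EGH93}.) The first step of the reduction is harmless: fixing any term order $\tau$ and passing to the initial ideal $J=\mathrm{in}_{\tau}(I)$, one has $\dim_{\FF}J_d=\dim_{\FF}I_d$ for all $d$ and $\reg(I)\le\reg(J)$; in particular $\dim(J_C)=\dim(I_C)$, so $\LPP(J;c_1,\dots,c_n;C)=\LPP(I;c_1,\dots,c_n;C)$, and it suffices to prove the inequality for the monomial ideal $J$.

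\textbf{The main obstacle.} To apply Clements--Lindstr\"om one wants $J$ to contain $\mathfrak{a}$, and this is the hard point: it cannot be arranged merely by choosing $\tau$, nor by passing to a generic initial ideal, since the generic initial ideal of a regular sequence of degrees $c_1,\dots,c_n$ need not contain the pure powers $x_i^{c_i}$ -- already for a generic pair of quadrics in $\FF[x_1,x_2]$ the reverse-lexicographic generic initial ideal is $(x_1^2,x_1x_2,x_2^3)$, which omits $x_2^2$. What is really needed is a flat degeneration of $R/I$ onto $R/J'$ with $J'$ monomial and $\mathfrak{a}\subseteq J'$, and producing such a degeneration for every $I$ containing a regular sequence of the prescribed degrees is exactly the content of Conjecture~\ref{conj:EGH}. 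This step is available -- and hence so is the whole argument -- in a number of settings: when $I$ is already monomial, when $c_1=\cdots=c_n=2$, when the number of variables is small, and when the degree sequence satisfies a gap condition such as $c_{j+1}>\sum_{i\le j}(c_i-1)$ for some $j$; these are essentially the cases in which the conjecture is currently known.

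\textbf{Conclusion in the monomial case.} Once $\mathfrak{a}\subseteq J$ with $J$ monomial, Clements--Lindstr\"om produces a lex-plus-powers ideal $\mathcal{L}_0\supseteq\mathfrak{a}$ with the same Hilbert function as $J$, and the lex-plus-powers extremality of Betti numbers gives $\reg(J)\le\reg(\mathcal{L}_0)$. Finally one compares $\mathcal{L}_0$ with the ideal $\mathcal{L}=\LPP(I;c_1,\dots,c_n;C)$ of Notation~\ref{def:I:LPP}: both are lex-plus-powers ideals for the exponents $c_1,\dots,c_n$ and satisfy $\dim(\mathcal{L}_0)_C=\dim J_C=\dim(\mathcal{L})_C$, and since $\mathcal{L}$ is the largest lex-plus-powers ideal with that value in degree $C$, a Hilbert-function comparison based on Macaulay's growth bound -- valid for every $C\ge c_n$ -- yields $\reg(\mathcal{L}_0)\le\reg(\mathcal{L})$. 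Chaining $\reg(I)\le\reg(J)\le\reg(\mathcal{L}_0)\le\reg(\mathcal{L})$ then gives the claim in all the cases covered above; a proof valid for arbitrary degree sequences would require settling the degeneration of the previous paragraph, which is the crux of the Eisenbud--Green--Harris conjecture and which I do not expect to resolve here.
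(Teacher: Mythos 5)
The statement you were asked to prove is labelled a \emph{conjecture} in the paper, and the paper offers no proof of it: it is the Eisenbud--Green--Harris conjecture (cited to \cite{EGH93}, with \cite{CDS} as a survey of known cases), and the authors use it only as a hypothesis in Theorem~\ref{teor:DD:LPP}, Corollary~\ref{cor:DD:salizzoni}, Corollary~\ref{cor:DD:tenti}, and Theorem~\ref{teor:D>q}. So there is no proof in the paper to compare yours against, and no complete proof exists in the literature. Your proposal correctly recognizes this: you do not claim to prove the statement, you outline the standard strategy (degenerate to a monomial ideal, invoke Clements--Lindstr\"om together with lex-plus-powers extremality of Betti numbers, then compare with $\LPP(I;c_1,\dots,c_n;C)$), and you correctly isolate the genuine obstruction --- producing a flat degeneration of $R/I$ onto $R/J'$ with $J'$ monomial and containing $(x_1^{c_1},\dots,x_n^{c_n})$ --- as the open heart of the conjecture. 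Your example showing that the generic initial ideal of two generic quadrics in two variables is $(x_1^2,x_1x_2,x_2^3)$, which misses $x_2^2$, is accurate and makes the obstruction concrete. This is an honest and essentially correct diagnosis rather than a proof, which is the best one can do here.

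Two secondary cautions. First, the ``lex-plus-powers analogue of Bigatti--Hulett--Pardue'' is itself a conjecture in general (the LPP conjecture on graded Betti numbers); it is a theorem for monomial ideals containing the pure powers (Mermin--Murai), which is the only setting in which you invoke it, but you should not call it classical or unconditionally available. Second, your final comparison $\reg(\mathcal{L}_0)\le\reg(\mathcal{L})$ is glossed too quickly: $\mathcal{L}_0$ matches the full Hilbert function of $J$ and may have minimal generators in degrees above $C$, while $\mathcal{L}$ is generated by the pure powers and a lex segment in degree $C$ only, so regularity is not obviously monotone between them; making this step precise requires the growth properties of LPP ideals in the Clements--Lindstr\"om sense together with the explicit regularity formula of Proposition~\ref{prop:reg:LPP} (this is carried out in \cite{CD21}). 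Neither caveat changes your main conclusion that the argument cannot be completed for arbitrary degree sequences.
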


The conjecture is motivated by the following observation.
A classical result by Macaulay shows that, among all homogeneous ideals with the same Hilbert function in a given degree $C$, the lex-segment generated in degree $C$ has the smallest Hilbert function in the next degree. If one restricts to ideals that contain a regular sequence in given degrees, the natural analog of the lex-segment is the LPP ideal from Notation~\ref{def:I:LPP}. For ideals that contain a regular sequence of maximum cardinality, minimum growth of the Hilbert function corresponds to maximum Castelnuovo-Mumford regularity.
In order to compute the degree of regularity of an LPP ideal, we use the following result by Caviglia and De Stefani.

\begin{prop}[{\cite[Lemma 2.3]{CD21}}]\label{prop:reg:LPP}
Let $c_1\leq \ldots \leq c_n$ and $2\leq C \leq \sum_{i=1}^{n} (c_i-1)$. Let $\mathcal{L}=(x_1^{c_1},\ldots,x_n^{c_n})+L$ be a $(c_1,\ldots,c_n;C)$-LPP ideal, and assume that $\mathcal{L} \neq (x_1^{c_1},\ldots,x_n^{c_n})$. Let $u=x_k^{t_k}v$, with $t_k \neq 0$ and $v\in\mathbb{F}[x_{k+1},\ldots,x_n]$, be the smallest monomial with respect to the lexicographic order which belongs to $L$ and has degree $C$. Then $$\reg(\mathcal{L})=t_k+\sum_{i=k+1}^{n}(c_i-1).$$
\end{prop}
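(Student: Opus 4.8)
\emph{Proof proposal.} The plan is to use that $\mathcal{L}$ is an Artinian monomial ideal --- it contains $(x_1^{c_1},\dots,x_n^{c_n})$ and all $c_i\ge1$ --- so that its regularity can be read off combinatorially. For any ideal $0\ne\mathcal{L}\subsetneq R$ one has $\reg(\mathcal{L})=\reg(R/\mathcal{L})+1$, and for a finite-length module $\reg(R/\mathcal{L})=\max\{d:(R/\mathcal{L})_d\ne0\}$; since for a monomial ideal $(R/\mathcal{L})_d=0$ forces $(R/\mathcal{L})_{d+1}=0$, this top degree is well defined. Hence the statement reduces to
$$\max\{d:(R/\mathcal{L})_d\ne0\}=t_k-1+\sum_{i=k+1}^{n}(c_i-1),$$
that is, to locating the largest degree carrying a monomial lying in neither $(x_1^{c_1},\dots,x_n^{c_n})$ nor $L$.

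First I would unwind membership in $L$. Since $u=x_k^{t_k}v$ with $v\in\FF[x_{k+1},\dots,x_n]$ is the lex-smallest degree-$C$ monomial of $L$, a monomial $M$ with $\deg M\ge C$ lies in $L$ if and only if $M$ involves one of $x_1,\dots,x_{k-1}$ or the lex-largest degree-$C$ divisor of $M$ is $\ge_{\lex}u$ in $\FF[x_k,\dots,x_n]$. I would also record, using the maximality of $L$ in Notation~\ref{def:I:LPP} (so that $u$ is a multiple of no $x_i^{c_i}$): $t_k\le c_k-1$, the exponents of $v$ are all $\le c_i-1$, hence $v$ divides $\prod_{i>k}x_i^{c_i-1}$; and $t_k\le C=\deg u$, with equality if and only if $v=1$.

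For the lower bound I take $M_0=x_k^{t_k-1}\prod_{i>k}x_i^{c_i-1}$, of degree $d_0:=t_k-1+\sum_{i>k}(c_i-1)$. All exponents of $M_0$ are $\le c_i-1$, so $M_0\notin(x_1^{c_1},\dots,x_n^{c_n})$; and $M_0\notin L$, because either $\deg M_0<C$, or the lex-largest degree-$C$ divisor of $M_0$ has $x_k$-exponent $t_k-1<t_k$ and so is $<_{\lex}u$. Thus $(R/\mathcal{L})_{d_0}\ne0$. For the upper bound, let $M=x_1^{a_1}\cdots x_n^{a_n}$ have degree $d_0+1=t_k+\sum_{i>k}(c_i-1)$, which is $\ge C$; if $M\in(x_1^{c_1},\dots,x_n^{c_n})$ we are done, so assume $a_i\le c_i-1$ for all $i$ and show $M\in L$. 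If $a_i\ge1$ for some $i<k$, then $M$ has a degree-$C$ divisor involving $x_i$, which is $>_{\lex}u$, so $M\in L$. Otherwise $M\in\FF[x_k,\dots,x_n]$; from $\sum_{i>k}a_i\le\sum_{i>k}(c_i-1)$ we get $a_k\ge t_k$, and if $a_k=t_k$ then in fact $a_i=c_i-1$ for all $i>k$. If $a_k>t_k$ or $a_k\ge C$, then the lex-largest degree-$C$ divisor of $M$ is $\ge_{\lex}u$; otherwise $a_k=t_k<C$, and this divisor equals $x_k^{t_k}w$ with $w$ the lex-largest degree-$(C-t_k)$ divisor of $\prod_{i>k}x_i^{c_i-1}$, so $w\ge_{\lex}v$ because $v$ is one such divisor. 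In every case $M\in L$, hence $(R/\mathcal{L})_{d_0+1}=0$, and therefore $\reg(\mathcal{L})=d_0+1=t_k+\sum_{i>k}(c_i-1)$.

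The reduction to the top nonvanishing degree of an Artinian module and the individual lex comparisons are routine; the step that needs care is the boundary case $a_k=t_k$ in the upper bound, where one must see that the tail of $M$ is forced to be exactly $\prod_{i>k}x_i^{c_i-1}$ and that $v$ lies at or below the lex-largest degree-$(C-t_k)$ divisor of that monomial. This is precisely where the maximal choice of $L$ --- equivalently $u\notin(x_1^{c_1},\dots,x_n^{c_n})$ --- is used.
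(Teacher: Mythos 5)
The paper does not actually prove this proposition --- it is imported from \cite{CD21} as a black box --- so there is no in-paper argument to compare against. Your overall strategy is the natural one and, for the most part, correct: reduce to the top nonvanishing degree of the Artinian quotient $R/\mathcal{L}$, exhibit the witness $M_0=x_k^{t_k-1}\prod_{i>k}x_i^{c_i-1}$ in degree $d_0$, and show that every monomial of degree $d_0+1$ lies in $\mathcal{L}$. The lex-segment bookkeeping and the boundary case $a_k=t_k$ are handled correctly, \emph{given} the properties of $u$ that you ``record'' at the outset.

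The gap is precisely in how you justify those properties: the claim that the maximality of $L$ from Notation~\ref{def:I:LPP} forces $u\notin(x_1^{c_1},\ldots,x_n^{c_n})$ is false, and the two conditions are not equivalent as you assert in your closing sentence. Take $n=3$, $c_1=c_2=c_3=2$, $C=3$, and $\mathcal{L}=(x_1^2,x_2^2,x_3^2,x_1x_2x_3)$. Every degree-$3$ monomial lies in $\mathcal{L}$, so the largest lex-segment ideal $L$ generated in degree $3$ with $(x_1^2,x_2^2,x_3^2)+L=\mathcal{L}$ is $L=\mathfrak{m}^3$, whose smallest degree-$3$ monomial is $u=x_3^3\in(x_3^2)$. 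Here $t_3=3>c_3-1$, your witness $M_0=x_3^{2}$ lies in $(x_3^2)\subseteq\mathcal{L}$, and the lower bound collapses (the formula happens to still output the correct value $3$, but your proof of it does not). The issue is not cosmetic, because the asserted formula genuinely depends on which admissible $L$ one reads $u$ from: in the same example the lex segment ending at $x_2^3$ also satisfies $(x_1^2,x_2^2,x_3^2)+L=\mathcal{L}$, yet gives $t_2+(c_3-1)=4\neq 3=\reg(\mathcal{L})$. The correct normalization --- the one in \cite{CD21}, and the one your argument actually needs --- is to take $u$ to be the smallest degree-$C$ monomial of $L$ \emph{not} lying in $(x_1^{c_1},\ldots,x_n^{c_n})$ (it exists since $\mathcal{L}\neq(x_1^{c_1},\ldots,x_n^{c_n})$). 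With that choice, $t_k\le c_k-1$ and $v\mid\prod_{i>k}x_i^{c_i-1}$ hold by definition and the rest of your proof goes through verbatim; but as written, the step deriving these facts from the maximality of $L$ is wrong and must be replaced.
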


Our first result is an explicit upper bound for the degree of regularity of a system of polynomials that is regular in degree $D$. In order to make the proof more readable, we introduce the following notation.

\begin{notation}
If $u \in R_D$ is a monomial that only involves the variables $x_{k},\ldots,x_n$ and has degree $a$ in $x_k$, we say that $u$ is a $(D,k,a)$-\emph{type} monomial. 
\end{notation}

In the next theorem, we provide an explicit upper bound for the degree of regularity of a system that is regular in degree $D$. The result is conditional on Conjecture~\ref{conj:EGH}. Notice that the result can be applied to practically any cryptographic system of equations of the same degree. In fact, even if the system of interest is not zero-dimensional, one can assign random values to some of the variables in order to obtain a zero-dimensional system, which is then solved for the remaining variables. 

\begin{teor} \label{teor:DD:LPP}
Assume that Conjecture~\ref{conj:EGH} holds.
Let $\mathcal{F}=\{f_1,\ldots,f_m\}\subseteq R$ be a polynomial system and assume that $f_1^{\topp},\ldots,f_m^{\topp}$ are linearly independent of degree~$D$. 
If~$m=n$, then let $k=0$, $t=D-1$.
Else, let $1\leq k\leq n-1$ and $1\leq t\leq D-1$ be such that $m-n$ belongs to the interval $(\sigma_{k,t-1},\sigma_{k,t}]$, where
$$\sigma_{k,t}=\sum_{i=1}^{k}\sum_{j=1}^{D-1} \binom{n-i-1+j}{j} - \sum_{j=t+1}^{D-1} \binom{n-k-1+j}{j} \;\;\mbox{ for }\; 0\leq t\leq D-1.$$
If $d_{\reg}(\mathcal{F})<+\infty$, then $$d_{\reg}(\mathcal{F})\leq (D-t)+(n-k)(D-1).$$
\end{teor}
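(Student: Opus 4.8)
The plan is to reduce the problem to a computation of the regularity of a specific LPP ideal via Conjecture~\ref{conj:EGH}, and then apply Proposition~\ref{prop:reg:LPP}. Since $\mathcal{F}$ is algebraically random, $\mathcal{F}^{\topp}=(f_1^{\topp},\ldots,f_m^{\topp})$ contains a regular sequence of $n$ polynomials of degree $D$, so $I=(\mathcal{F}^{\topp})$ contains a regular sequence of degrees $c_1=\ldots=c_n=D$. Recall that $d_{\reg}(\mathcal{F})=\reg(\mathcal{F}^{\topp})$ once $(\mathcal{F}^{\topp})_d=R_d$ for $d\gg0$, which holds here. By Conjecture~\ref{conj:EGH}, applied with $C=D$ (note $C=D=c_n\geq c_n$), we get $\reg(I)\leq\reg(\mathcal{L})$ where $\mathcal{L}=\LPP(I;D,\ldots,D;D)=(x_1^D,\ldots,x_n^D)+L$ with $L$ the largest lex-segment ideal generated in degree $D$ such that $\dim(\mathcal{L}_D)=\dim(I_D)$. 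So the whole statement comes down to identifying $\mathcal{L}$ explicitly and computing $\reg(\mathcal{L})$.

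The key combinatorial step is to count $\dim(I_D)$ and translate it into the structure of $L$. Since $f_1^{\topp},\ldots,f_m^{\topp}$ are linearly independent of degree $D$ and $I$ is generated in degree $D$, we have $\dim(I_D)=m$. On the LPP side, $\mathcal{L}_D$ is spanned by the $n$ pure powers $x_1^D,\ldots,x_n^D$ together with the monomials in $L$ of degree $D$; since $L$ is the \emph{largest} lex-segment generated in degree $D$ with $\dim(\mathcal{L}_D)=m$, the degree-$D$ part of $L$ consists of the top $m-n$ monomials of degree $D$ in the lexicographic order \emph{among those not equal to a pure power} (the pure powers are already accounted for). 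This is exactly where the quantities $\sigma_{k,t}$ enter: by definition $\sigma_{k,t}$ is the position, in the decreasing-lex list of non-pure-power degree-$D$ monomials, of the smallest $(D,k,D-t)$-type monomial. So the condition $m-n\in(\sigma_{k,t-1},\sigma_{k,t}]$ says precisely that the smallest monomial of degree $D$ in $L$ is a $(D,k,D-t)$-type monomial, i.e. it has the form $u=x_k^{D-t}v$ with $v\in\mathbb{F}[x_{k+1},\ldots,x_n]$ and $x_k$-degree equal to $D-t$. I would verify the formula for $\sigma_{k,t}$ by a direct count: $\sum_{j=1}^{D-1}\binom{n-i-1+j}{j}$ counts the non-pure-power monomials of degree $D$ whose largest-index variable present is $x_i$ (equivalently, divisible by $x_i$ but not by $x_1,\ldots,x_{i-1}$), and the subtracted term $\sum_{j=t+1}^{D-1}\binom{n-k-1+j}{j}$ removes those $(D,k,a)$-type monomials with $a<D-t$ that come after $u$ in the list.

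With $\mathcal{L}=(x_1^D,\ldots,x_n^D)+L$ identified and its smallest degree-$D$ generator of $L$ being $u=x_k^{D-t}v$ with $v\in\mathbb{F}[x_{k+1},\ldots,x_n]$, I would apply Proposition~\ref{prop:reg:LPP} with all $c_i=D$ and $C=D$. The hypotheses to check are $2\le C=D\le\sum_{i=1}^n(c_i-1)=n(D-1)$ (which holds whenever $D\ge 2$ and $n\ge 1$, and in the degenerate $D=1$ case the bound must be handled separately) and $\mathcal{L}\neq(x_1^D,\ldots,x_n^D)$ (which holds because $m>n$ forces $L$ to have a degree-$D$ generator that is not a pure power). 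The proposition then gives $\reg(\mathcal{L})=(D-t)+\sum_{i=k+1}^n(D-1)=(D-t)+(n-k)(D-1)$, matching the claimed bound. Finally, for the case $m=n$: then $\mathcal{F}^{\topp}$ is itself a regular sequence of $n$ forms of degree $D$, so $\reg(\mathcal{F}^{\topp})=n(D-1)+1$ by the standard complete-intersection formula, which agrees with the formula evaluated at $k=0$, $t=D-1$: $(D-(D-1))+(n-0)(D-1)=1+n(D-1)$.

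I expect the main obstacle to be the bookkeeping in the second paragraph: correctly matching the definition of $\sigma_{k,t}$ to the position of the smallest degree-$D$ monomial of $L$, and in particular handling the interaction with the pure powers $x_i^D$ (which are in $\mathcal{L}$ but should be excluded from the count defining $L$'s degree-$D$ monomials, since taking the \emph{largest} $L$ means those pure powers are absorbed into the $(x_i^D)$ part). A secondary subtlety is confirming that the smallest monomial in $L$ of degree $D$ really is of the form $x_k^{D-t}v$ with $v$ only in the later variables — this follows from the lex-segment property: once a monomial of a given $(D,k,a)$-type is in $L$, all lex-larger ones are too, and the lex-smallest non-pure-power monomial divisible by $x_k$ to exactly the power $D-t$ and nothing more is the candidate $u$; one must check it is not accidentally a pure power, which is guaranteed since $t\ge 1$ forces $v\neq 1$.
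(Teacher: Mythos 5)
Your proposal is correct and follows essentially the same route as the paper: reduce to the LPP ideal $\LPP(J;D,\ldots,D;D)$ via Conjecture~\ref{conj:EGH}, use the interval condition $m-n\in(\sigma_{k,t-1},\sigma_{k,t}]$ to identify the smallest non-pure-power degree-$D$ monomial $u$ of $\mathcal{L}$ as a $(D,k,D-t)$-type monomial, and apply Proposition~\ref{prop:reg:LPP}. The one point the paper treats more explicitly is the boundary case $m-n=\sigma_{k,D-1}$, where the largest admissible $L$ ends at the pure power $x_{k+1}^D$ rather than at $u=x_kx_n^{D-1}$, so Proposition~\ref{prop:reg:LPP} must be applied with that generator instead; the resulting value of $\reg(\mathcal{L})$ is the same, so your bound is unaffected.
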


\begin{proof}
If $m=n$, then $\mathcal{F}^{\topp}$ is a regular sequence of $n$ polynomials of degree $D$, hence $$d_{\reg}(\mathcal{F})=n(D-1)+1.$$ Suppose therefore that $m>n$ and let $J$ be the ideal generated by $\mathcal{F}^{\topp}$. Consider the lexicographic order on $R$ and let $$\mathcal{L}=\LPP(J;D,\ldots,D;D)=(x_1^D,\ldots,x_n^D)+L,$$ be the LPP ideal with $L$ the largest lex-segment ideal generated in degree $D$ such that $\dim(\mathcal{L}_D)=m.$ Since both $f_1^{\topp},\ldots,f_m^{\topp}$ and $x_1^D,\ldots,x_n^D$ are linearly independent, we have
$$\dim\left(\mathcal{L}_D/\langle x_1^D,\ldots,x_n^D\rangle\right)=m-n.$$ 

In order to apply Proposition~\ref{prop:reg:LPP}, we need to determine the value of $k$ and $t_k$ for the monomial in position $m-n$ in the ordered list of degree $D$ monomials in $\FF[x_1,\ldots,x_n]/(x_1^D,\ldots,x_n^D)$. As we will show, these values are the unique $k$ and $t_k$ such that $m-n$ belongs to the interval $(\sigma_{k,D-t_k-1},\sigma_{k,D-t_k}]$. 

For $1\leq k\leq n-1$ and $0\leq t\leq D-1$, the number of $(D,k,D-t)$-type monomials is 
$$\dim(\FF[x_{k+1},\ldots,x_{n}]_{t})=\binom{n-k-1+t}{t},$$
hence
\begin{equation*} 
\begin{split}
\sigma_{k,t} & = \sum_{i=1}^{k-1} \sum_{j=1}^{D-1} \binom{n-i-1+j}{j}  + \sum_{j=1}^{t} \binom{n-k-1+j}{j}  \\
 & = \sum_{i=1}^{k-1} \sum_{j=1}^{D-1} \dim ( \FF[x_{i+1},\ldots,x_{n}])_{j}  + \sum_{j=1}^{t}\dim ( \FF[x_{k+1},\ldots,x_{n}])_{j} 
\end{split}
\end{equation*}
is the number of degree $D$ monomials in $\FF[x_1,\ldots,x_n]$ different from $x_1^D,\ldots,x_n^D$ and bigger than or equal to $x_k^{D-t}x_n^t$, the smallest $(D,k,D-t)$-type monomial. 
In other words, the monomial $x_k^{D-t}x_n^t$ is in position $\sigma_{k,t}$ in the ordered list of degree $D$ monomials in $\FF[x_1,\ldots,x_n]/(x_1^D,\ldots,x_n^D)$. Notice moreover that $\sigma_{1,0}=0$ and $\sigma_{k,0}=\sigma_{k-1,D-1}$ for $2\leq k\leq n-1$.

If $u$ is the smallest monomial in $\mathcal{L}_D/\langle x_1^D,\ldots,x_n^D\rangle$, then $u$ is in position $m-n$ in the ordered list of degree $D$ monomials in $\FF[x_1,\ldots,x_n]/(x_1^D,\ldots,x_n^D)$. 
If $\sigma_{k,t-1}<m-n\leq\sigma_{k,t}$ for some $1\leq k\leq n-1$ and $1\leq t\leq D-1$, then $u$ is a $(D,k,D-t)$-type monomial. Since $0=\sigma_{1,0}<m-n\leq\dim(\FF[x_1,\ldots,x_n]_D)/(x_1^D,\ldots,x_n^D)_D=\sigma_{n-1,D-1}$, then $m-n$ always belong to one of the intervals above.

The ideal $L$ is generated by the degree $D$ monomials which are greater than or equal to $u$, unless the monomial following $u$ in lexicographic decreasing order is a pure power. In that case, $u=x_kx_n^{D-1}$ for some $1\leq k\leq n-1$, i.e. $m-n=\sigma_{k,D-1}$, and the smallest degree $D$ monomial in $L$ is $x_{k+1}^D$. Then $L$ is generated by the degree $D$ monomials which are greater than or equal to $x_{k+1}^D$, which is a $(D,k+1,D)$-type monomial.
In both situations
$$\reg(\mathcal{L})= (D-t) + (n-k)(D-1)$$
by Proposition~\ref{prop:reg:LPP}. The thesis now follows from observing that 
\begin{equation} \label{eq:dreg<reg}
d_{\reg}(\mathcal{F})=\reg(J)\leq \reg(\mathcal{L}),
\end{equation}
where the inequality follows from Conjecture~\ref{conj:EGH}.
\end{proof}

We stress that the upper bound from Theorem~\ref{teor:DD:LPP} is sharp for all values of $m,n,D$. In fact, it is met by any system $\mathcal{F}$ such that $(f_1^{\topp},\ldots,f_m^{\topp})$ is a $(D,\ldots,D;D)$-LPP ideal.

In the next example we show how to compute the bound from Theorem~\ref{teor:DD:LPP} for concrete choices of the parameters.

\begin{example} 
Let $n=6$ and $D=3$, so  $\mathcal{F}=\{f_1,\ldots,f_m\} \subseteq \FF_q[x_1,\ldots,x_6]_3$. Then $1\leq k\leq 5$ and $0\leq t\leq 2$. The values of $\sigma_{k,t}$ are
\begin{center}
\begin{tabular}{ |c|c|c|c| } 
\hline
$\sigma_{k,t}$ & $t=0$ & $t=1$ & $t=2$ \\ \hline
$k=1$ & 0 & 5 & 20 \\ \hline 
$k=2$ & 20 & 24 & 34 \\  \hline
$k=3$ & 34 & 37 & 43 \\ \hline 
$k=4$ & 43 & 45 & 48 \\ \hline 
$k=5$ & 48 & 49 & 50 \\ \hline
\end{tabular}
\end{center}
If $m=12$, then $m-n=6$ and $\sigma_{1,1}< 6 \leq \sigma_{1,2}$. Hence $k=1$, $t=2$, and
$d_{\reg}(\mathcal{F}) \leq 11.$\\
\noindent
If $m=42$, then $m-n=36$ and $\sigma_{3,0}< 36 \leq \sigma_{3,1}$. Hence $k=3$, $t=1$, and
$d_{\reg}(\mathcal{F}) \leq 8.$\\
\noindent
If $m=54$, then $m-n=48$ and $\sigma_{4,1}< 48 \leq \sigma_{4,2}$. Hence $k=4$, $t=2$, and
$d_{\reg}(\mathcal{F}) \leq 5.$
\end{example}

\begin{remark}
The upper bound of Theorem \ref{teor:DD:LPP} is decreasing as a function of $m$, as one would expect. In particular, as $m-n$ passes from an interval $(\sigma_{k,t-1},\sigma_{k,t}]$ to the next, the upper bound decreases by one. 
The largest value of the bound is obtained in the case $m=n$, which corresponds to $\mathcal{F}^{\topp}$ being a regular sequence. In this case, the value for the bound is well-known and is $n(D-1)+1$. The smallest value for the bound is obtained in the case $m-n=\sigma_{n-1,D-1}$, which corresponds to $\langle\mathcal{F}^{\topp}\rangle+\langle x_1^D,\ldots,x_n^D\rangle=R_D$. In this case, the value of the bound is $D$. 
\end{remark}

In order to provide a concrete example, we apply Theorem~\ref{teor:DD:LPP} to the Cubic Simple Matrix encryption scheme proposed in~\cite{DPW14}. This provides the only proven bound on the degree of regularity of this system, to the best of our knowledge.

\begin{example}
The Cubic Simple Matrix encryption scheme consists of $m=2n$ cubic equations in $n$ variables, with $n$ a square. 
 To compute the bound of Theorem~\ref{teor:DD:LPP} for this system, we need to find the unique $1\leq k\leq n-1$ and $t\in\{1,2\}$ such that $m-n=n\in(\sigma_{k,t-1},\sigma_{k,t}]$. Since $\sigma_{1,1}=n-1$ and $\sigma_{1,2}=n-1+{n\choose 2}$, we find $k=1$ and $t=2$. Hence, the Cubic Simple Matrix encryption scheme has $d_{\reg}\leq 2n-1$. 

We are not aware of any other proven bound on the degree of regularity of this system. However, \cite{DPW14} contains a heuristic analysis for $n\gg 0$ and experimental results for small $n$.
In~\cite[Section 4.2]{DPW14} the authors perform a heuristic analysis and conclude that, for $n$ sufficiently large, the complexity of solving the cubic system using the XL Algorithm should be at least that of solving a random quadratic system of $n$ equations in $n$ variables. The degree of regularity of the latter is $n+1$, provided that the top degree part of the equations is a regular sequence. In the same section, they carry on numerical experiments for $n=4,9,16$, from which they find degree of regularity $5,6,7$, respectively.
\end{example}

Combining Theorem~\ref{teor:DD:LPP} and Theorem~\ref{teor:salizzoni}, we obtain the following. A related result concerning the solving degree of a standard algorithm will be proven after Theorem~\ref{teor:D>q}.

\begin{cor}\label{cor:DD:salizzoni}
Let $\mathcal{F}=\{f_1,\ldots,f_m\}\subseteq R$ be a system of polynomials such that $f_1^{\topp},\ldots,f_m^{\topp}$ are linearly independent of degree~$D$. Assume that $d_{\reg}(\mathcal{F})<+\infty$. If $m=n$, then let $k=0$, $t=D-1$.
Else, let $1\leq k\leq n-1$ and $1\leq t\leq D-1$ be such that $m-n$ belongs to the interval $(\sigma_{k,t-1},\sigma_{k,t}]$. 
If Conjecture~\ref{conj:EGH} holds, then 
$$\solvdeg^m(\mathcal{F})\leq D-t+1+(n-k)(D-1).$$ 
\end{cor}

\begin{proof}
The upper bound found in Theorem~\ref{teor:DD:LPP} for the degree of regularity of $\mathcal{F}$ is bigger than or equal to $D$ for all $k$ and $t$. The thesis then follows from Theorem~\ref{teor:salizzoni}.
\end{proof}

\begin{remark}\label{rmk:D<q}
While the estimates of Theorem~\ref{teor:DD:LPP} and  Corollary~\ref{cor:DD:salizzoni} hold for every $D$, they are most relevant for $D<q$. In fact, if $D\geq q$, the bound on the solving degree of a mutant system obtained using Theorem~\ref{teor:salizzoni} does not increase if one adds the field equations to the system. Moreover, one always has $d_{\reg}(\mathcal{F}\cup\{x_1^q-x_1,\ldots,x_n^q-x_n\})\leq d_{\reg}(\mathcal{F})$. Therefore, if $D\geq q$, one should add the field equations to the system before solving it.
\end{remark}

We now bound the degree of regularity and the solving degree of a system of the form $\mathcal{F}\cup\{x_1^q-x_1,\ldots,x_n^q-x_n\}$, where $\mathcal{F}$ is a system of $m$ equations of degree $D$ in $n$ variables. This is especially relevant in the case $\mathbb{F}=\mathbb{F}_q$, but the results that we will prove hold (and are stated) over an arbitrary field  $\mathbb{F}$.  
Because of Remark \ref{rmk:D<q}, we focus on the case when $D\geq q$.
After reducing the equations of $\mathcal{F}$ modulo the field equations, they have degree at most $q-1$ in each variable, hence total degree at most $n(q-1)$. Therefore, in the sequel we may assume without loss of generality that $$D\leq n(q-1).$$

First, we observe that one can easily derive a lower bound on the degree of regularity.

\begin{remark}
Let $\mathcal{F}=\{f_1,\ldots,f_m\}$ be a polynomial system of degree $D \leq n(q-1)$. Since $(\mathcal{F}^{\topp}\cup\{x_1^q,\ldots, x_n^q\})_{D-1}=(x_1^q,\ldots, x_n^q)_{D-1}\neq R_{D-1}$, then
$$d_{\reg}(\mathcal{F} \cup \{x_1^q-x_1,\ldots, x_n^q-x_n\})\geq D.$$
\end{remark}

We now want to derive an upper bound. We start by computing the number of linearly independent homogeneous polynomials of degree $D$ as a function of $n,D,q$.

\begin{lemma}\label{rmk:m}
Let $\mathcal{F}=\{f_1,\ldots,f_m\}\subseteq R$ with $\deg(f_1)=\ldots=\deg(f_m)=D$. Assume that $q\leq D\leq n(q-1)$.
If $f_1^{\topp},\ldots,f_m^{\topp}$ are linearly independent of degree $D$, then 
\begin{equation}\label{boundonm}m\leq\sum_{i=0}^{\lfloor \frac{D}{q} \rfloor}(-1)^i \binom{n}{i}\binom{n+D-1-iq}{n-1}.
\end{equation}
Conversely, if inequality (\ref{boundonm}) is satisfied, then the set of systems $\mathcal{F}$ such that that $f_1^{\topp},\ldots,f_m^{\topp}$ are linearly independent modulo $(x_1^q,\ldots,x_n^q)$ is open and nonempty.
\end{lemma}

\begin{proof}
A standard Hilbert function computation shows that the number of homogeneous polynomials in $n$ variables of degree $D$ which are linearly independent modulo $(x_1^q,\ldots,x_n^q)$ is 
$$\dim(R/(x_1^q,\ldots,x_n^q))_D=\sum_{i=0}^{\lfloor \frac{D}{q} \rfloor}(-1)^i \binom{n}{i}\binom{n+D-1-iq}{n-1}.$$ Therefore, if $f_1^{\topp},\ldots,f_m^{\topp}$ are linearly independent, then inequality (\ref{boundonm}) is satisfied. 
The set of systems $\mathcal{F}$ as in the statement is always open, since linear dependence is a rank condition, hence a closed condition.
Suppose then that inequality (\ref{boundonm}) is satisfied. Then there exist $g_1,\ldots,g_m\in (R/(x_1^q,\ldots,x_n^q))_D$ linearly independent, so the set of systems $\mathcal{F}$ in the statement is nonempty. 
\end{proof}

The simplest case is that of very overdetermined systems, more specifically the case when $f_1^{\topp},\ldots,f_m^{\topp}$ are too many to be linearly independent modulo $(x_1^q,\ldots,x_n^q)$. This happens when $m$ is larger than the bound from Lemma~\ref{rmk:m}. It is easy to show that, in such a situation, the degree of regularity is equal to the degree of the equations of the system.

\begin{prop} \label{prop:D>q}
Let $\mathcal{F}=\{f_1,\ldots,f_m\}$ be a polynomial system of degree $D$, where $q\leq D\leq n(q-1)$. If $m\geq \sum_{k=0}^{\lfloor \frac{D}{q} \rfloor}(-1)^k\binom{n}{k}\binom{n+D-1-kq}{n-1}$, then there is a dense open set $\mathcal{W}$ such that, if $\mathcal{F}\in\mathcal{W}$, then $$d_{\reg}(\mathcal{F}\cup\{x_1^q-x_1,\ldots,x_n^q-x_n\})=D.$$
Moreover,
$$\solvdeg^s(\mathcal{F}\cup\{x_1^q-x_1,\ldots,x_n^q-x_n\})\leq 2D-2$$
and
$$\solvdeg^m(\mathcal{F}\cup\{x_1^q-x_1,\ldots,x_n^q-x_n\})\leq D+1.$$
\end{prop}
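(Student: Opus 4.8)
The plan is to split the statement into two parts: the computation of $d_{\reg}(\mathcal{F}\cup\{x_1^q-x_1,\ldots,x_n^q-x_n\})$, which is where all the work lies, and the two solving degree bounds, which will then follow at once from Theorem~\ref{teor:tenti:semaev} and Theorem~\ref{teor:salizzoni}.

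For the degree of regularity, first I would record that, since $D\geq q\geq 2$, the top part of $\mathcal{F}\cup\{x_1^q-x_1,\ldots,x_n^q-x_n\}$ is $\mathcal{G}:=\mathcal{F}^{\topp}\cup\{x_1^q,\ldots,x_n^q\}$, so that $d_{\reg}(\mathcal{F}\cup\{x_1^q-x_1,\ldots,x_n^q-x_n\})$ is the least $d$ with $\langle\mathcal{G}\rangle_d=R_d$. The inequality $d_{\reg}\geq D$ is exactly Remark~\ref{rem:dreg>D}. For the reverse inequality I would pass to $A:=R/(x_1^q,\ldots,x_n^q)$, whose degree-$D$ component $A_D$ has the standard monomial basis consisting of the $N:=\dim A_D$ monomials of degree $D$ with every exponent at most $q-1$; here $N$ equals the right-hand side of the inequality in the statement, by the Hilbert function computation recalled in Remark~\ref{rmk:m}. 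It suffices to show that the images $\overline{f_1^{\topp}},\ldots,\overline{f_m^{\topp}}$ span $A_D$: indeed this forces $\langle\mathcal{G}\rangle_D=R_D$, and since a homogeneous ideal that fills one graded component fills all higher ones (multiply by $R_1$), it follows that $\langle\mathcal{G}\rangle_d=R_d$ for all $d\geq D$, hence $d_{\reg}=D$.

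It then remains to produce the dense open set. I would define $\mathcal{W}$ to be the set of systems $\mathcal{F}$ of $m$ polynomials of degree $D$ for which $\overline{f_1^{\topp}},\ldots,\overline{f_m^{\topp}}$ span $A_D$. Writing the coordinate vectors of the $\overline{f_i^{\topp}}$ in the monomial basis of $A_D$ as the columns of an $N\times m$ matrix $M$, membership in $\mathcal{W}$ amounts to $\mathrm{rank}(M)=N$, i.e.\ to the non-vanishing of at least one maximal minor of $M$; since these minors are polynomials in the coefficients of the $f_i^{\topp}$, the set $\mathcal{W}$ is Zariski-open and depends only on $\mathcal{F}^{\topp}$. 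It is nonempty: as $m>N$, one may take $f_1^{\topp},\ldots,f_N^{\topp}$ to be the $N$ basis monomials of $A_D$ and $f_{N+1}^{\topp},\ldots,f_m^{\topp}$ arbitrary, and then the $\overline{f_i^{\topp}}$ already span $A_D$. Being a nonempty open subset of an irreducible parameter space, $\mathcal{W}$ is dense, which completes the proof that $d_{\reg}(\mathcal{F}\cup\{x_1^q-x_1,\ldots,x_n^q-x_n\})=D$ for $\mathcal{F}\in\mathcal{W}$.

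Finally, since $D\geq q$, one has $\max\{q,\deg(f_1),\ldots,\deg(f_m)\}=D=d_{\reg}(\mathcal{F}\cup\{x_1^q-x_1,\ldots,x_n^q-x_n\})$, so Theorem~\ref{teor:tenti:semaev} applies and yields $\solvdeg^s(\mathcal{F}\cup\{x_1^q-x_1,\ldots,x_n^q-x_n\})\leq 2D-2$, while Theorem~\ref{teor:salizzoni} gives $\solvdeg^m(\mathcal{F}\cup\{x_1^q-x_1,\ldots,x_n^q-x_n\})\leq\max\{D+1,D,q\}=D+1$. The only point requiring care is the bookkeeping around $\mathcal{W}$: one must check that the spanning condition is genuinely an open condition on the parameter space and that the formula of Remark~\ref{rmk:m} correctly identifies $N$ with the number of degree-$D$ monomials avoiding all $q$-th powers, which is what makes the explicit spanning system available in the nonemptiness step. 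Beyond this I do not anticipate a serious obstacle.
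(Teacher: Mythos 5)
Your proposal is correct and follows essentially the same route as the paper: the paper also defines $\mathcal{W}$ by the condition $\langle f_1^{\topp},\ldots,f_m^{\topp}\rangle+(x_1^q,\ldots,x_n^q)_D=R_D$ (your spanning condition in $A_D$), obtains the lower bound $d_{\reg}\geq D$ from the degree-$(D-1)$ component exactly as in Remark~\ref{rem:dreg>D}, and then invokes Theorem~\ref{teor:tenti:semaev} and Theorem~\ref{teor:salizzoni}. You merely spell out the openness, nonemptiness, and density of $\mathcal{W}$ and the verification of the hypotheses of Theorem~\ref{teor:tenti:semaev}, which the paper leaves implicit.
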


\begin{proof}
Since by assumption $$m\geq\sum_{k=0}^{\lfloor \frac{D}{q} \rfloor}(-1)^k\binom{n}{k}\binom{n+D-1-kq}{n-1}=\dim(R/(x_1^q,\ldots,x_n^q))_D,$$ then there is an open set $\mathcal{W}$ of $m$-tuples of polynomials of degree $D$ such that $$\langle f_1^{\topp},\ldots,f_m^{\topp}\rangle+(x_1^q,\ldots,x_n^q)_D=R_D.$$ Since $(\mathcal{F}^{\topp}\cup\{x_1^q,\ldots,x_n^q\})_{D-1}=(x_1^q,\ldots,x_n^q)_{D-1}\neq R_{D-1}$, then $$d_{\reg}(\mathcal{F}\cup\{x_1^q-x_1,\ldots,x_n^q-x_n\})=D.$$ The rest of the statement now follows from Theorem~\ref{teor:tenti:semaev} and Theorem~\ref{teor:salizzoni}.
\end{proof}

The next theorem yields an upper bound on the degree of regularity of a system of equations of degree $D$ larger than the field size, to which we add the field equations. 
We start with a preparatory lemma, whose proof follows directly from the definition.

\begin{lemma}\label{lemma:interval}
Let $u,v$ be monomials of type $(D,k,a)$ and $(D,h,b)$,  respectively. If $u\geq v$, then either $k<h$ or $k=h$ and $a\geq b$. In particular, if $u,v,w$ are monomials such that $u\geq v\geq w$ and $u$ and $w$ have the same type, then $v$ also has the same type as $u$ and $w$.
\end{lemma}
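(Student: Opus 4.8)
The plan is to unwind the definition of a $(D,k,a)$-type monomial and compare exponent vectors directly under the lexicographic order. A monomial $u$ of type $(D,k,a)$ has exponent vector of the form $(0,\dots,0,a,\ast,\dots,\ast)$, where the first $k-1$ entries vanish, the $k$-th entry equals $a\geq 1$, and the remaining entries are supported on $x_{k+1},\dots,x_n$ and sum to $D-a$; likewise a monomial $v$ of type $(D,h,b)$ has exponent vector whose first $h-1$ entries are zero and whose $h$-th entry is $b\geq 1$.

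First I would prove the first assertion by a three-way case split on $k$ versus $h$. If $k<h$ there is nothing to prove. If $k>h$, then since $h\leq k-1$ the $h$-th coordinate of the exponent vector of $u$ is $0$, while the $h$-th coordinate of the exponent vector of $v$ is $b\geq 1$; as all coordinates in positions $1,\dots,h-1$ vanish for both monomials, the lex comparison is decided at position $h$ in favour of $v$, so $v>_{\lex}u$, contradicting $u\geq v$. Hence this case does not occur. Finally, if $k=h$, the exponent vectors of $u$ and $v$ agree (and are zero) in positions $1,\dots,k-1$, so the comparison $u\geq_{\lex} v$ is decided starting at the $k$-th coordinate; this forces $a\geq b$. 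Thus $u\geq v$ implies $k<h$, or $k=h$ and $a\geq b$.

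For the ``in particular'' statement, suppose $u\geq v\geq w$ with $u$ and $w$ both of type $(D,k,a)$, and let $v$ have type $(D,h,b)$. Applying the first assertion to $u\geq v$ gives $k<h$, or $k=h$ and $a\geq b$; applying it to $v\geq w$ gives $h<k$, or $h=k$ and $b\geq a$. The inequalities $k<h$ and $h<k$ are incompatible with each other and with the equalities, so in both alternatives the equality case must hold, forcing $h=k$, $a\geq b$, and $b\geq a$, i.e.\ $h=k$ and $a=b$. Therefore $v$ is of type $(D,k,a)$, the same type as $u$ and $w$.

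I do not anticipate any real obstacle: the statement is bookkeeping about which coordinates of an exponent vector are forced to vanish by the type. The only point requiring a bit of care is to verify that positions $1,\dots,\min(k,h)-1$ are zero for both monomials, so that the lexicographic comparison is genuinely decided at coordinate $\min(k,h)$; once this is observed, every case reduces to reading off a single coordinate.
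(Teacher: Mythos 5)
Your proof is correct and is exactly the argument the paper has in mind: the paper omits the proof entirely, remarking only that it ``follows directly from the definition,'' and your coordinate-by-coordinate lexicographic comparison (with the observation that positions $1,\dots,\min(k,h)-1$ vanish for both monomials, and that the $k$-th exponent $a\geq 1$ so the type is well defined) is precisely that direct verification. Nothing further is needed.
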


The next theorem applies, e.g., to a system of equations of the same degree $D\geq 2$ over $\FF_2$. Notice that, by Lemma~\ref{rmk:m}, the assumption that $f_1^{\topp},\ldots,f_m^{\topp}$ are linearly independent modulo $(x_1^q,\ldots,x_n^q)_D$ implies that $m\leq\sum_{i=0}^{\lfloor \frac{D}{q} \rfloor}(-1)^i \binom{n}{i}\binom{n+D-1-iq}{n-1}$.

\begin{teor}\label{teor:D>q}
Assume that Conjecture~\ref{conj:EGH} holds. 
Let $\mathcal{F}=\{f_1,\ldots,f_m\}$ be a polynomial system of degree $D$, with $q\leq D \leq n(q-1)$. Assume that $f_1^{\topp},\ldots,f_m^{\topp}$ are linearly independent modulo $(x_1^q,\ldots,x_n^q)_D$.
Let $1\leq t\leq q-1$ and $1\leq k\leq n-1$ be such that $m$ belongs to the interval $(\varsigma_{k,t-1},\varsigma_{k,t}]$, where 
$$\varsigma_{k,t} = \sum_{i=1}^{k-1}\sum_{j=1}^{q-1} \eta_{i,j} + \sum_{j=1}^{t} \eta_{k,j}$$
and 
$$\eta_{k,t}= \sum_{i=0}^{\left\lfloor \frac{D+t}{q} \right\rfloor -1} (-1)^{i} \binom{n-k}{i} \binom{n-k-1+D-(i+1)q+t}{n-k-1}.$$
Let $$B=q-t+(n-k)(q-1).$$
If $m=\varsigma_{k,t}$ and either $t\neq q-1$ and $D\geq 2q-t-1$, or $t=q-1$ and $D\geq 2q-1$, then
$$d_{\reg}(\mathcal{F}\cup\{x_1^q-x_1,\ldots,x_n^q-x_n\})\leq B-1,$$
$$\solvdeg^s(\mathcal{F}\cup\{x_1^q-x_1,\ldots,x_n^q-x_n\})\leq
2(B-2),$$ and  
$$\solvdeg^m(\mathcal{F}\cup\{x_1^q-x_1,\ldots,x_n^q-x_n\})\leq 
B.$$
In any other case
$$d_{\reg}(\mathcal{F}\cup\{x_1^q-x_1,\ldots,x_n^q-x_n\})\leq 
B,$$
$$\solvdeg^s(\mathcal{F}\cup\{x_1^q-x_1,\ldots,x_n^q-x_n\})\leq 
2(B-1),$$
and 
$$\solvdeg^m(\mathcal{F}\cup\{x_1^q-x_1,\ldots,x_n^q-x_n\})\leq
B+1.$$
\end{teor}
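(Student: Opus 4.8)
The plan is to mimic the structure of the proof of Theorem~\ref{teor:DD:LPP}, but working inside the quotient ring $\overline{R}/(x_1^q,\ldots,x_n^q)$ instead of $\overline{R}$, so that adding the field equations becomes adding the pure powers $x_1^q,\ldots,x_n^q$. First I would observe that by Remark~\ref{rem:random:w/:field_eq} the system $\mathcal{F}\cup\{x_1^q-x_1,\ldots,x_n^q-x_n\}$ lies in $\mathcal{U}_{\{q,\ldots,q\}}$, so its top-degree part contains the regular sequence $x_1^q,\ldots,x_n^q$. Let $J$ be the ideal generated by $\mathcal{F}^{\topp}\cup\{x_1^q,\ldots,x_n^q\}$ and let $\mathcal{L}=\LPP(J;q,\ldots,q;D)=(x_1^q,\ldots,x_n^q)+L$ with $L$ the largest lex-segment ideal generated in degree $D$ such that $\dim(\mathcal{L}_D)=\dim(J_D)$. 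By the linear independence hypothesis on $f_1^{\topp},\ldots,f_m^{\topp}$ modulo $(x_1^q,\ldots,x_n^q)_D$, we have $\dim\bigl(\mathcal{L}_D/(x_1^q,\ldots,x_n^q)_D\bigr)=m$. By Conjecture~\ref{conj:EGH}, $d_{\reg}(\mathcal{F}\cup\{x_1^q-x_1,\ldots,x_n^q-x_n\})=\reg(J)\leq\reg(\mathcal{L})$, so everything reduces to computing $\reg(\mathcal{L})$ via Proposition~\ref{prop:reg:LPP}.

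The combinatorial heart is then to identify the smallest degree-$D$ monomial $u$ in $L$ modulo $(x_1^q,\ldots,x_n^q)$, i.e.\ the monomial sitting in position $m$ in the decreasing-lex list of degree-$D$ monomials of $\overline{R}/(x_1^q,\ldots,x_n^q)$, and to read off its type $(D,k,D-t)$ (equivalently, its exponent $D-t$ in $x_k$, where now $1\leq t\leq q-1$ because no variable can carry exponent $\geq q$ in the quotient). Here $\eta_{k,t}$ must be shown to equal $\dim\bigl(\overline{R}/(x_1^q,\ldots,x_n^q)\bigr)$ restricted to monomials of type $(D,k,D-t)$ — that is, the number of monomials of degree $t$ in $x_{k+1},\ldots,x_n$ with every exponent $\le q-1$, which is $\dim\bigl(\overline{\FF}[x_{k+1},\ldots,x_n]/(x_{k+1}^q,\ldots,x_n^q)\bigr)_t$; the alternating sum formula for $\eta_{k,t}$ is exactly the inclusion–exclusion Hilbert-function computation of Remark~\ref{rmk:m} applied to $n-k$ variables in degree $t$ (note $\lfloor t/q\rfloor$ is replaced by the stated upper summation limit once one accounts for $D$, since the relevant degree after stripping the $x_k$-power is $D-(D-t)=t$ but the pure power bound involves $D$; I would double-check this index bookkeeping carefully). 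Summing these counts as $i$ runs over the "earlier" variables $x_1,\ldots,x_{k-1}$ (each contributing $\sum_{j=1}^{q-1}\eta_{i,j}$ monomials, for exponents $D-1$ down to $D-(q-1)$ in $x_i$) and then over $j=1,\ldots,t$ for the variable $x_k$ gives the cumulative count $\sigma_{k,t}$, so that $m\in(\sigma_{k,t-1},\sigma_{k,t}]$ pins down the type of $u$. An analogue of Lemma~\ref{lemma:interval} guarantees the interval membership is well-defined and that $u$ genuinely has the type predicted.

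Once $u$ is identified as a $(D,k,D-t)$-type monomial, Proposition~\ref{prop:reg:LPP} (applied with $c_1=\cdots=c_n=q$, and with the role of "$t_k$" played by $D-t$ when the generator of $L$ is $u=x_k^{D-t}v$) yields $\reg(\mathcal{L})=(D-t)+(n-k)(q-1)$. The subtlety producing the two cases in the statement is the boundary phenomenon already seen in Theorem~\ref{teor:DD:LPP}: when $m=\sigma_{k,t}$, the monomial immediately below $u$ in the list may be a pure power $x_{k+1}^D$ — impossible here since $D\ge q$ — or, more precisely, $L$ might actually be generated starting from a monomial of a different type than the "generic" description, which is why the extra hypotheses "$t\neq q-1$ and $D\geq 2q-t-1$, or $t=q-1$ and $D\geq 2q-1$" are imposed: these ensure the exponent $D-t$ of $x_k$ in $u$ is still $\ge q$ (so $x_k^{D-t}$ is itself outside $(x_1^q,\ldots,x_n^q)$ only through its being a genuine degree-$D$ generator, forcing a drop of one in the regularity estimate $B\mapsto B-1$), whereas in all other cases the plain bound $B=q-t+(n-k)(q-1)$ holds. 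I would verify by a short exponent comparison that $q-t+(n-k)(q-1)\ge q$ always, so the bound is never smaller than $D$'s lower bound from Remark~\ref{rem:dreg>D}, and that the gap $B-(q-t+(n-k)(q-1))$ is exactly the discrepancy between "$D-t$" and "$q-t$" collapsed by the boundary hypothesis. Finally, the two solving-degree bounds follow mechanically: the standard bound from Theorem~\ref{teor:tenti:semaev} (legitimate because $d_{\reg}\ge D\ge q$ by Remark~\ref{rem:dreg>D}, so the hypothesis $d_{\reg}\ge\max\{q,D\}$ holds), giving $2\,d_{\reg}-2\le 2(B-1)$ or $2(B-2)$; and the mutant bound from Theorem~\ref{teor:salizzoni}, giving $\max\{d_{\reg}+1,D\}=d_{\reg}+1\le B+1$ since $d_{\reg}\ge D$.

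\textbf{Main obstacle.} The delicate part is neither the EGH reduction nor the regularity formula, both of which are "plug in the cited result," but the exact combinatorial identification of the type of the position-$m$ monomial in $\overline{R}/(x_1^q,\ldots,x_n^q)$, including: (i) proving $\eta_{k,t}$ is the claimed inclusion–exclusion count with the correct summation range (the floor $\lfloor(D+t)/q\rfloor-1$ rather than something cleaner is a warning sign that the bookkeeping of which monomials of degree $D$ survive modulo the pure powers is fiddly), (ii) showing the intervals $(\sigma_{k,t-1},\sigma_{k,t}]$ partition $[1,\dim(R/(x_1^q,\ldots,x_n^q))_D]$ and that $u$'s type is read off correctly on the boundary, and (iii) correctly diagnosing when the boundary case $m=\sigma_{k,t}$ under the stated degree hypotheses forces the "$B-1$" improvement versus the generic "$B$." I expect step (i)–(iii) to absorb most of the work, with the rest being routine invocation of Proposition~\ref{prop:reg:LPP}, Theorem~\ref{teor:tenti:semaev}, and Theorem~\ref{teor:salizzoni}.
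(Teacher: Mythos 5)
Your overall strategy is the paper's: reduce via Remark~\ref{rem:random:w/:field_eq} and Conjecture~\ref{conj:EGH} to the $(q,\ldots,q;D)$-LPP ideal $\mathcal{L}=(x_1^q,\ldots,x_n^q)+L$ with $\dim(\mathcal{L}_D/(x_1^q,\ldots,x_n^q)_D)=m$, locate the position-$m$ monomial in the decreasing-lex list of degree-$D$ monomials of $R/(x_1^q,\ldots,x_n^q)$, apply Proposition~\ref{prop:reg:LPP}, and finish with Theorems~\ref{teor:tenti:semaev} and~\ref{teor:salizzoni}. However, the combinatorial core is wrong as you state it. The position-$m$ monomial $u$ has type $(D,k,q-t)$, i.e.\ exponent $q-t$ in $x_k$, not $D-t$: outside $(x_1^q,\ldots,x_n^q)$ every exponent is at most $q-1$, so as $m$ grows the $x_k$-exponent runs down from $q-1$ to $1$ (indexed by $t=1,\ldots,q-1$), and the residual degree carried by $x_{k+1},\ldots,x_n$ is $D-(q-t)=D-q+t$, not $t$. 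Correspondingly, $\eta_{k,t}=\dim\bigl[\FF_q[x_{k+1},\ldots,x_n]/(x_{k+1}^q,\ldots,x_n^q)\bigr]_{D-q+t}$, which is exactly what the stated alternating sum with upper limit $\lfloor (D+t)/q\rfloor-1$ computes. With your identification the exponent fed into Proposition~\ref{prop:reg:LPP} is $D-t$, yielding $\reg(\mathcal{L})=(D-t)+(n-k)(q-1)$, which exceeds the claimed bound $B=(q-t)+(n-k)(q-1)$ whenever $D>q$; your closing remark about a ``gap between $D-t$ and $q-t$ collapsed by the boundary hypothesis'' does not repair this, since $B$ is by definition $q-t+(n-k)(q-1)$ and the discrepancy has nothing to do with the case $m=\sigma_{k,t}$. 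This index error is the main obstacle you flagged, and it is not a bookkeeping detail: the theorem is false with $D-t$ in place of $q-t$.

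A second, smaller gap is the mechanism behind the two cases. Proposition~\ref{prop:reg:LPP} must be applied to $v$, the smallest degree-$D$ monomial of the lex-segment part $L$ (which, by maximality of $L$, is the lex-successor of the first quotient monomial $w$ \emph{not} in $\mathcal{L}_D$), and not to $u$ itself. When $m\neq\sigma_{k,t}$, $w$ and hence $v$ still have type $(D,k,q-t)$ by Lemma~\ref{lemma:interval}, giving $\reg(\mathcal{L})=B$. When $m=\sigma_{k,t}$, $u$ is the smallest monomial of its type, $w$ drops to type $(D,k,q-t-1)$ (or $(D,k+1,q-1)$ when $t=q-1$), and the hypotheses $D\geq 2q-t-1$ (resp.\ $D\geq 2q-1$) are exactly what guarantee that $w$ is not the smallest monomial of its new type, so that $v$ inherits that type and the regularity drops to $B-1$. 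Your explanation --- that the hypotheses ensure ``the exponent $D-t$ of $x_k$ in $u$ is still $\geq q$'' --- cannot be right, since no monomial outside $(x_1^q,\ldots,x_n^q)$ has an exponent $\geq q$. The final passage to the solving-degree bounds via Remark~\ref{rem:dreg>D}, Theorem~\ref{teor:tenti:semaev}, and Theorem~\ref{teor:salizzoni} is correct and matches the paper.
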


\begin{proof}
The idea of the proof is similar to that of Theorem \ref{teor:DD:LPP}. In order to apply Proposition~\ref{prop:reg:LPP}, we need to determine the value of $k$ and $t_k$ for the monomial in position $m$ in the ordered list of degree $D$ monomials in $\FF[x_1,\ldots,x_n]/(x_1^q,\ldots,x_n^q)$. As we will show, these values are the unique $k$ and $t_k$ such that $m$ belongs to the interval $(\varsigma_{k,q-t_k-1},\varsigma_{k,q-t_k}]$. 

We start by proving that $\eta_{k,t}$ is the number of $(D,k,q-t)$-type monomials in $R$ which are linearly independent modulo $(x_{1}^q,\ldots,x_n^q)_D$, that is
\begin{equation}\label{eqn:eta}
\eta_{k,t} = \dim \left[ \frac{\FF[x_{k+1},\ldots, x_n]}{(x_{k+1}^q,\ldots,x_n^q)}  \right]_{D-q+t}.
\end{equation}
Since $x_{k+1}^q,\ldots,x_n^q$ is a regular sequence in $\FF[x_{k+1},\ldots, x_n]$, a standard computation involving Hilbert series yields the explicit formula
$$\dim \left[ \frac{\FF[x_{k+1},\ldots, x_n]}{(x_{k+1}^q,\ldots,x_n^q)}  \right]_{D-q+t} = \sum_{i=0}^{\left\lfloor \frac{D+t}{q} \right\rfloor -1} (-1)^{i} \binom{n-k}{i} \binom{n-k-1+D-(i+1)q+t}{n-k-1},$$
where we notice that $\eta_{k,t}\neq 0$ only if $D+t-q\leq (n-k)(q-1)$. This establishes the equality in (\ref{eqn:eta}). Similarly to the proof of Theorem~\ref{teor:DD:LPP}, we notice that
$$\varsigma_{k,t} = \sum_{i=1}^{k-1} \sum_{j=1}^{q-1} \eta_{i,j} + \sum_{j=1}^{t}\eta_{k,j}$$
is the number of monomials of degree $D$ which do not belong to $(x_1^q,\ldots,x_n^q)$ and are greater than or equal to $x_k^{q-t}x_n^{D-q+t}$, the smallest $(D,k,q-t)$-type monomial.
Moreover, $\varsigma_{k-1,q-1}=\varsigma_{k,0}$ for $2\leq k\leq n-1$.

Let $I \subseteq \FF[x_1,\ldots,x_n]$ be the ideal generated by $\mathcal{F}^{\topp} \cup \{x_1^q,\ldots,x_n^q\}$. 
Let $\mathcal{L}=(x_1^q,\ldots,x_n^q)+L$ be the $(q,\ldots,q;D)$-LPP ideal such that
$$\dim(I_D)=\dim(\mathcal{L}_D)=m+\dim(x_1^q,\ldots,x_n^q)_D.$$
Hence $\mathcal{L}$ is minimally generated by $x_1^q,\ldots,x_n^q$ and $m$ monomials of degree $D$ that do not belong to $(x_1^q,\ldots,x_n^q)$.
Recall that, by assumption, $L$ is the largest lex-segment ideal generated in degree $D$ such that $\mathcal{L}=(x_1^q,\ldots,x_n^q)+L$.

Let $u$ be the smallest degree $D$ monomial in $\mathcal{L}/(x_1^q,\ldots,x_n^q)$. Notice that $u$ is a $(D,k,q-t)$-type monomial, since $m$ belongs to the interval $(\varsigma_{k,t-1},\varsigma_{k,t}]$. Let $v$ be the smallest degree $D$ monomial in $L$. 
If $u$ is the smallest monomial in $(R/(x_1^q,\ldots,x_n^q))_D$, then $u=x_k^{q-t}x_{k+1}^{q-1}\cdots x_n^{q-1}$ and $D=q-t+(n-k)(q-1)$. Moreover, $\mathcal{L}_D=R_D$ and $v=x_n^D$ is a $(D,n,D)$-type monomial.

Assume now that $u$ is not the smallest monomial in $(R/(x_1^q,\ldots,x_n^q))_D$ and let $w$ be the monomial in $(R/(x_1^q,\ldots,x_n^q))_D$ which follows $u$ in decreasing lexicographic order.
Then $u\geq v\geq w$ and $v$ is the degree $D$ monomial next to $w$ in increasing lexicographic order.
If $m\neq \varsigma_{k,t}$, then $w$ has type $(D,k,q-t)$, hence so does $v$ by Lemma \ref{lemma:interval}. 
Suppose therefore that $m=\varsigma_{k,t}$.
%
%
Write $D-q+t=(n-\ell)(q-1)+r$, where $0\leq r<q-1$. Notice that $D-q+t<(n-k)(q-1)$, since $u$ is not the smallest monomial in $(R/(x_1^q,\ldots,x_n^q))_D$. Then $\ell>k$. 
In this situation, $u=x_k^{q-t}x_{\ell}^r x_{\ell+1}^{q-1}\cdots x_n^{q-1}$ and $w=x_k^{q-t-1}x_{k+1}^{q-1}\cdots x_{k+n-\ell}^{q-1}x_{k+n-\ell+1}^{r+1}$. Notice that $w$ is a $(D,k,q-t-1)$-type monomial if $t\neq q-1$ and a $(D,k+1,q-1)$-type monomial if $t=q-1$. If $w$ is not the smallest degree $D$ monomial of its type in $R_D$, then $v$ has the same type as $w$. If $w$ is the smallest degree $D$ monomial of its type in $R_D$, then $w=x_k^{q-t-1}x_{k+1}^{D-q+t+1}$ in the case $t\neq q-1$ and $w=x_{k+1}^{q-1}x_{k+2}^{D-q+1}$ in the case $t=q-1$. Since $w\not\in(x_1^q,\ldots,x_n^q)$, this is only possible if $D\leq 2q-t-2$ for $t\neq q-1$ and $D\leq 2q-2$ for $t=q-1$. In this situation, $v$ has type $(D,k,q-t)$ in the case $t\neq q-1$ and type $(D,k+1,q)$ in the case $t=q-1$.

If Conjecture~\ref{conj:EGH} holds, then 
\begin{equation}\label{eqn:conj}
d_{\reg}(\mathcal{F} \cup \{x_1^q-x_1,\ldots, x_n^q-x_n\})\leq \reg(\mathcal{L}).
\end{equation}
Moreover, by Proposition~\ref{prop:reg:LPP}
\begin{equation}\label{eqn:dreg}
\reg(\mathcal{L})=\left\{\begin{array}{ll}
q-t-1 + (n-k)(q-1) & \mbox{if } m=\varsigma_{k,t} \mbox{ and } \\ & \mbox{either } t\neq q-1 \mbox{ and } D\geq 2q-t-1,\\
 & \mbox{or } t=q-1 \mbox{ and } D\geq 2q-1, \\ \\
q-t + (n-k)(q-1) & \mbox{otherwise.}
\end{array}\right.
\end{equation}
The bound on the degree or regularity now follows from (\ref{eqn:conj}) and (\ref{eqn:dreg}).
The bounds on the solving degree follow from the bound on the degree of regularity, Theorem~\ref{teor:tenti:semaev}, and Theorem~\ref{teor:salizzoni}.
\end{proof}

We state as a corollary the case of a system of binary equations. This is a case of special interest and has a simpler formulation than the general case. 

\begin{cor}\label{cor:D>q}
Assume that Conjecture~\ref{conj:EGH} holds. 
Let $\mathcal{F}=\{f_1,\ldots,f_m\}$ be a system of degree $D$ polynomials in $n$ variables over $\FF_2$, with $2\leq D\leq n$. Assume that $f_1^{\topp},\ldots,f_m^{\topp}$ are linearly independent modulo $x_1^2,\ldots,x_n^2$.
Let $1\leq k\leq n-1$ be such that $m$ belongs to the interval $(\varsigma_{k-1},\varsigma_k]$, where 
$$\varsigma_k=\sum_{i=1}^k {n-i\choose D-1}.$$
If $m=\varsigma_{k}$ and and $D\geq 3$, then
$$d_{\reg}(\mathcal{F}\cup\{x_1^2-x_1,\ldots,x_n^2-x_n\})\leq n-k,$$
$$\solvdeg^s(\mathcal{F}\cup\{x_1^2-x_1,\ldots,x_n^2-x_n\})\leq
2(n-k-1),$$ and  
$$\solvdeg^m(\mathcal{F}\cup\{x_1^2-x_1,\ldots,x_n^2-x_n\})\leq 
n-k+1.$$
Else
$$d_{\reg}(\mathcal{F}\cup\{x_1^2-x_1,\ldots,x_n^2-x_n\})\leq 
n-k+1,$$
$$\solvdeg^s(\mathcal{F}\cup\{x_1^2-x_1,\ldots,x_n^2-x_n\})\leq 
2(n-k),$$
and 
$$\solvdeg^m(\mathcal{F}\cup\{x_1^q-x_1,\ldots,x_n^q-x_n\})\leq
n-k+2.$$
\end{cor}

Corollary~\ref{teor:D>q} applies to, e.g., the Unbalanced Oil and Vinegar scheme (UOV) from~\cite[Section 2]{UOV}, which over $\FF_2$ is a system of $m$ quadratic equations in $n>m$ variables, to which one adds the $n$ quadratic field equations. It also applies to the Oil and Vinegar scheme of degree three from~\cite[Section 9]{UOV} and to the Unbalanced Oil, Vinegar and Salt scheme from~\cite[Section 11]{UOV}. Over $\FF_2$, the first is a system of $m$ cubics in $n>m$ variables, to which one adds the $n$ quadratic field equations. The second is a system of $m$ quadratic equations with the Oil and Vinegar structure in $n>m$ variables, to which one adds $r$ random quadratic equations and the $n$ quadratic field equations. 

\begin{example}
Let $\mathcal{F}$ be an UOV system in degree $2\leq D\leq n$ over $\FF_2$. 
For a UOV system one always has $m<n$, since $n-m$ is the number of vinegar variables. If $D=n$, then $m=\varsigma_1=1$. Else, $\varsigma_1={n-1\choose D-1}\geq n-1$. In both cases, one has $m\in(0,\varsigma_1]$ and $k=1$. Therefore, by Corollary~\ref{cor:D>q} $$d_{\reg}(\mathcal{F}\cup\{x_1^2-x_1,\ldots,x_n^2-x_n\})\leq n$$ and if either $D=n$ and $m=1$, or $D\geq 3$ and $m=n-1$, then $$d_{\reg}(\mathcal{F}\cup\{x_1^2-x_1,\ldots,x_n^2-x_n\})\leq n-1.$$
\end{example}

One can also approach a UOV system by assigning random values to $n-m$ variables and trying to solve for the rest. This yields a system of $m$ equations in $m$ variables, to which one can still add the field equations. We compute the bound for a quadratic UOV system.

\begin{example}
Let $\mathcal{F}$ be an UOV system in degree $D=2$ over $\FF_2$. 
Assign random values to $n-m$ variables to obtain a system of $m$ quadratic equations in $m$ variables. If $m>2$, then $m\leq\dim_{\FF_2}\FF_2[x_1,\ldots,x_m]_2-\dim_{\FF_2}\langle x_1^2,\ldots,x_m^2\rangle$, so one expects that the specializations of $f_1^{\topp},\ldots,f_m^{\topp}$ remain linearly independent modulo $\langle x_1^2,\ldots,x_m^2\rangle$. 
Since $\varsigma_1=m-1$ and $\varsigma_2=2m-3$, then $m\in(\varsigma_1,\varsigma_2]$ and $k=2$. Therefore, by Corollary~\ref{cor:D>q} the degree of regularity is bounded from above by $m-1$.
\end{example}

In Theorem \ref{teor:DD:LPP} and Theorem \ref{teor:D>q} we consider a system $\mathcal{F}$ of equations of the same degree~$D$ and provided upper bounds on the degree of regularity of $\mathcal{F}$ under the assumption that $\mathcal{F}$ is regular in degree $D$, and of $\mathcal{F}\cup\{x_1^q-x_1,\ldots,x_n^q-x_n\}$ if $D\geq q$. 
Since our results also apply to semiregular systems, the bounds that we obtain are necessarily larger than the value of the degree of regularity of a semiregular system of equations of the same degree. A common approach in the cryptoanalysis of a multivariate scheme is to assume that the corresponding system is semiregular. Under such an assumption, one can compute the value of the degree of regularity of the system. We stress that, even though the bounds that we obtain are less tight than those that one obtains by assuming that a system is semiregular, they have the advantage of being provable, since they just require that the degree of regularity is finite and do not rely on the heuristic assumption that the system is semiregular.

\end{document}